\newtheorem{assumption}{Assumption}
\newtheorem{theorem}{Theorem}
\newtheorem{lemma}{Lemma}
\newtheorem{definition}{Definition}
\newcommand{\vg}[1]{\mbox{\boldmath$ #1 $\unboldmath}}
\def\v{\mathbf}\def\m{\mathbf}\def\vg{\boldsymbol}\def\s{\mathcal}
\begin{document}
	\begin{center}
		{\Large\bf Linear screening for high-dimensional computer experiments}
		\\[2mm] Chunya \textsc{Li}$^{1,2}$, Daijun \textsc{Chen}$^3$, Shifeng \textsc{Xiong}$^{2}$\footnote{Corresponding author: Shifeng Xiong. Email address: xiong@amss.ac.cn.}
		\\ 1.School of Mathematical Sciences, University of Chinese Academy of Sciences\\ 2.NCMIS, Academy of Mathematics and Systems Science \\Chinese Academy of Sciences\\3.Nuance Communications Inc
	\end{center}
	
	\vspace{1cm} \noindent{\bf Abstract}\quad
	In this paper we propose a linear variable screening method for computer experiments when the number of input variables is larger than the number of runs. This method uses a linear model to model the nonlinear data, and screens the important variables by existing screening methods for linear models. When the underlying simulator is nearly sparse, we prove that the linear screening method is asymptotically valid under mild conditions. To improve the screening accuracy, we also provide a two-stage procedure that uses different basis functions in the linear model. The proposed methods are very simple and easy to implement. Numerical results indicate that our methods outperform existing model-free screening methods.
	
	
	\vspace{1cm} \noindent{{\bf KEY WORDS:} Best linear approximation, best subset regression, nonlinear model, sensitivity analysis, sure independence screening.}

	
	
	\section{Introduction}\label{sec:intro}
	
	Nowadays computer experiments are commonly used to study computer simulations in engineering and scientific investigations (Santner, Williams, and Notz 2018). Computer simulations usually have complex nonlinear input-output relationships with long running times. Furthermore, they often involve larger numbers of input variables (Fang, Li, and Sudjianto 2006).
	For example, building performance simulation is used to predict performance aspects of a building, and its inputs include various types of parameters such as climate parameters, geometry parameters, envelope parameters, and so on. For large buildings, the number of these inputs can be much larger than one hundred (Clarke 2001). Examples of computer simulations with large numbers of input variables can also be found in climate simulations (Roulstone and Norbury 2013) and manufacturing simulations (Jahangirian et al. 2010).
	
	Many authors discussed the screening/selection problem or related sensitivity analysis problem for computer simulations with many inputs. If only a small proportion of the inputs are active or influential, variable screening or sensitivity analysis methods can detect active inputs that have major impact on the output, and thus we can better understand the input-output relationship. Morris (1991) proposed a design-based one-factor-at-a-time factor screening method. Schonlau and Welch (2006) presented a screening method via analysis of variance and visualization. Linkletter et al. (2006) and Reich, Strolie, and Bondell (2009) provided Bayesian selection methods. Moon, Dean, and Santner (2012) proposed a two-stage sensitivity-based group screening method. Sung et al. (2017) provided a multi-resolution functional ANOVA approach for many-input computer experiments. However, these methods are not applicable to the cases where the number of variables is larger than the number of runs. Such cases are common in practice since we usually have limited runs to analyze a high-dimensional computer simulation due to the long running time. In addition, the ``large $p$ small $n$" problem often appears in the first stage of analyzing high-dimensional simulations. Based on the screening result in the first stage, more efficient design and analysis strategies can be made in the follow-up study.
	
	This paper focuses on the variable screening problem for computer experiments when the number of inputs, $p$, is larger than the number of runs, $n$. In recent years, plenty of methodologies were proposed to screen important variables for $p>n$ problems in statistics. Fan and Lv (2008) proposed the sure independence screening (SIS) method for linear regression models. This method was extended to generalized linear models (Fan and Song 2010), nonparametric additive models (Fan, Feng, and Song 2011), and varying coefficient models (Fan, Ma, and Dai 2014). Many model-free screening methods were also provided in the literature; see, e.g., Zhu et al. (2011), Li, Zhong, and Zhu (2012), Huang and Zhu (2016), and Lu and Lin (2017). These model-free methods can be used for aforementioned high-dimensional computer experiments, and their performance is in need of evaluation.
	
	It should be noted that most screening methods for $p>n$ cases in the literature are marginal methods that only use the separate relationship between each variable and the response. In this paper we consider the screening problem from another angle. Compared to the number of variables, available runs are very limited. It seems that models as simple as possible should be first considered. Therefore, we adopt the linear regression model to the data from high-dimensional computer experiments, and use the $\ell_0$-screening principle for the linear model (Xiong 2014; Xu and Chen 2014) to screen the active input variables of the nonlinear simulator. It can be seen that the idea of this linear screening method is similar to that of the regression method in global sensitivity analysis for computer experiments (Santner, Williams, and Notz 2018), which uses regression coefficients under the linear regression model as sensitivity indices for the input variables.

The linear screening method is very simple and easy to implement. One of the main contributions of this paper is to prove its asymptotic validity.
To handle the bias cased by the model simplicity, we investigate the best linear approximation (BLA) of a nonlinear computer simulator. When the simulator is nearly sparse, we show that the active variables are still active in its BLA under mild conditions. Based on this, we prove the asymptotic validity of the linear $\ell_0$-screening principle for computer experiments with $p>n$. Consequently, sophisticated screening algorithms other than the marginal methods for linear regression models are proposed in our linear screening procedure. A large number of numerical results indicate that the proposed methods perform better than the marginal screening methods in the literature. In addition, the screening accuracy of the proposed methods can be improved through using different basis functions in the underlying linear model.

	The rest of the paper is organized as follows. In section 2, we give the definition of BLA and discuss its properties. Section 3 provides theoretical results on the asymptotic validity of the linear screening methods for nonlinear computer models. In section 4, we discuss the linear screening methods with different basis functions. Section 5 gives the numerical results. Section 6 ends this paper with some discussion. Additional definitions and all proofs are given in the Supplementary Materials.

	\section{Best linear approximation of a nonlinear function}
	
	Suppose that the input-output relationship of a deterministic computer simulation is\begin{equation} y=f(\v{x}),\label{f}\end{equation}where the input variables $\v{x}=\left(x_1,\ldots,x_p\right)'\in[0,1]^p$, $f$ is continuous, i.e., $f\in C[0,1]^p$, and $'$ denotes the transpose. For $n$ design sites $\v{x}_1,\ldots,\v{x}_n$, the corresponding outputs are $y_1,\ldots,y_n$, where $\v{x}_i=\left(x_{i1},\ldots,x_{ip}\right)'\in[0,1)^p$. Let $\m{X}=\left(\v{x}_1,\cdots,\v{x}_n\right)'$ and $\v{y}=\left(y_1,\ldots,y_n\right)'$. When we discuss asymptotics, $f$ in \eqref{f} depends on $n$, and is also written as $f_n$.
	
	When $p$ is larger than $n$, popular modeling and variable selection methods for computer experiments such as Kriging (Matheron 1963) are difficult to apply. Compared with the dimensionality, our data are very limited. It seems that we should use a very simple model for the data. Here the linear regression model \begin{equation}\label{lm}y=\phi_0+\vg{\phi}'\v{x}+\epsilon\end{equation}is under consideration, where $\phi_0 \in \mathbb{R},\ \vg{\phi} \in \mathbb{R}^p$ are unknown coefficients and $\epsilon$ is the measurement error. In fact, the linear part of the above linear model corresponds to the best linear approximation (BLA) of $f$, which is defined as\begin{align*}
	\beta_0+\vg{\beta}'\v{x}
	=\mathop{\arg\min}_{g\in \{\phi_0+{\vg{\phi}}'\v{x}:\ \phi_0 \in \mathbb{R},\ \vg{\phi} \in \mathbb{R}^p\}}\int_{[0,1]^p}\left[f(\v{x})-g(\v{x})\right]^2d\v{x}.
	\end{align*}
	Let $\vg{\beta}=\left(\beta_1,\cdots,\beta_p\right)'$ and
	\begin{align*}H(\phi_0,\phi_1,\cdots,\phi_p)&=\int_{[0,1]^p}\left[f\left(\v{x}\right)-\left(\phi_0+\phi_1x_1+\ldots+\phi_px_{p}\right)\right]^2d\v{x}.
	\end{align*}
	Taking partial derivatives of $H$ with respect to $\phi_0, \ldots, \phi_p$ and letting them be equal to zero, we have
	\begin{eqnarray}
	&&\beta_0=\int_{[0,1]^p}f_n\left(\v{x}\right)d\v{x}-\frac{1}{2}\sum\limits_{j=1}^p\beta_j,\label{beta0}\\
	&&\beta_j=12\left(\int_{[0,1]^p}x_jf_n\left(\v{x}\right)dx-\frac{1}{2}\int_{[0,1]^p}f_n\left(\v{x}\right)dx\right),\ j=1,\cdots,p.\label{betaj}
	\end{eqnarray}

	To discuss theoretical properties of our methods, we make the following basic assumption.
	\begin{assumption}\label{assumption 2.1}
		There exist $\widetilde{f}_n \in C[0,1]^{p_0}$ and $\eta_n>0$ such that
		\begin{align}\label{2}
		\sup_{\left(x_1,\ldots,x_p\right)'\in[0,1]^p}\left|f_n(x_1,\ldots,x_p)-\widetilde{f}_n\left(x_1,\cdots, x_{p_0}\right)\right| < \eta_n,
		\end{align}
		where $p_0<p$. Furthermore, for each $j=1,\cdots,p_0$,
		\begin{align}\label{3}
		\left|\int_{[0,1]^p} x_jf_n\left(\v{x}\right)d\v{x}-\frac{1}{2}\int_{[0,1]^p}f_n\left(\v{x}\right)d\v{x}\right| >\tau
		\end{align}  where $\tau$ is a positive constant.
	\end{assumption}
	Unlike the (complete) sparsity assumption in the literature of high-dimensional screening, we allow the computer model $f_n$ to be different from a model with $p_0$ variables in the first part \eqref{2} of Assumption \ref{assumption 2.1}. This indicates that $f_n$ also depends on the less important variables $x_{p_0+1}\cdots,x_{p}$, and matches the practical cases better than the sparsity assumption. By \eqref{betaj}, the second part \eqref{3} requires that each of these $p_0$ variables should be active in the BLA of $f_n$. Specially, we have the following result.
	\begin{theorem}\label{THM1} Under Assumption \ref{assumption 2.1}, $\left|\beta_j\right|<12\eta_n$ for $j=p_0+1,\ldots,p$, and $\left|\beta_j\right|>12\tau$ for $j=1,\ldots,p_0$.
	\end{theorem}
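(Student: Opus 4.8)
The plan is to treat the two assertions separately, since the claim about the active variables $j=1,\ldots,p_0$ is essentially a restatement of \eqref{3}, whereas the claim about the inactive variables $j=p_0+1,\ldots,p$ is where the approximation structure of Assumption \ref{assumption 2.1} does the real work. First I would dispose of the active case: combining the closed form \eqref{betaj} with the lower bound \eqref{3} is immediate, because \eqref{betaj} writes $\beta_j$ as exactly $12$ times the quantity whose absolute value \eqref{3} bounds below by $\tau$. Hence $\left|\beta_j\right|>12\tau$ for $j=1,\ldots,p_0$ with no further effort.

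The substantive part is the bound for $j=p_0+1,\ldots,p$. Here my first step is to rewrite \eqref{betaj} in the symmetric form
\[
\beta_j=12\int_{[0,1]^p}\left(x_j-\tfrac12\right)f_n(\v{x})\,d\v{x},
\]
absorbing the $-\tfrac12\int f_n$ term into the integrand. Next I would split $f_n=\widetilde{f}_n+(f_n-\widetilde{f}_n)$ and handle the two pieces differently. The key step, which I expect to carry the whole argument, is that the $\widetilde{f}_n$ contribution vanishes exactly. Since $j>p_0$, the function $\widetilde{f}_n$ does not depend on $x_j$, so integrating over $x_j$ first (Fubini) factors the integral over $[0,1]^p$ into $\int_0^1\left(x_j-\tfrac12\right)dx_j$ times the integral of $\widetilde{f}_n$ over the remaining coordinates; because $\int_0^1\left(x_j-\tfrac12\right)dx_j=0$, the entire term is zero. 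Informally, the inactive directions are ``centered out'' by the $-\tfrac12$ shift.

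With the $\widetilde{f}_n$ piece removed, I am left with
\[
\beta_j=12\int_{[0,1]^p}\left(x_j-\tfrac12\right)\bigl(f_n(\v{x})-\widetilde{f}_n(x_1,\ldots,x_{p_0})\bigr)\,d\v{x},
\]
and the final step is a routine estimate using the uniform bound $\left|f_n-\widetilde{f}_n\right|<\eta_n$ from \eqref{2}. Applying the triangle inequality to the two terms $\int x_j(f_n-\widetilde{f}_n)$ and $\tfrac12\int(f_n-\widetilde{f}_n)$ separately, and using $\int_{[0,1]^p}x_j\,d\v{x}=\tfrac12$ together with the unit volume of $[0,1]^p$, yields $\left|\beta_j\right|<12\bigl(\tfrac12\eta_n+\tfrac12\eta_n\bigr)=12\eta_n$, as claimed. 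One can in fact sharpen this to $3\eta_n$ by keeping the combined integrand $\left(x_j-\tfrac12\right)$ and using $\int_0^1\left|x_j-\tfrac12\right|dx_j=\tfrac14$, but $12\eta_n$ is all that the screening argument requires. The only place demanding genuine insight is the exact vanishing of the $\widetilde{f}_n$ term; everything else is bookkeeping with elementary one-dimensional integrals.
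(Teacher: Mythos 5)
Your proposal is correct and follows essentially the same route as the paper: the paper's proof (deferred to Lemma \ref{lemma6}, inequalities \eqref{eb1} and \eqref{eb2}) also gets the active case directly from \eqref{betaj} and \eqref{3}, and for the inactive case writes $f_n=\widetilde{f}_n+r_n$, observes that the $\widetilde{f}_n$ contribution cancels exactly because $\widetilde{f}_n$ does not depend on $x_j$ for $j>p_0$, and then bounds the remainder by the triangle inequality using $\sup|r_n|<\eta_n$, $\int x_j\,d\v{x}=\tfrac12$, and unit volume to get $12\eta_n$. Your extra remark that keeping the combined integrand $(x_j-\tfrac12)$ sharpens the bound to $3\eta_n$ is a valid observation not made in the paper, but otherwise the arguments coincide.
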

	
	For an integer $d$, let $Z_d=\left\{1,\cdots,d\right\}$. If $\eta_n\to0$ as $n\to\infty$, then Theorem \ref{THM1} indicates that, under Assumption \ref{assumption 2.1}, $\s{A}_0=Z_{p_0}$ can be viewed as the true submodel of both the original model $f_n$ and its BLA. Note that the vector of coefficients $\vg{\beta}$ can be sparsely estimated by regularized least squares under the linear model \eqref{lm}. Theorem \ref{THM1} basically guarantees the validity of our linear screening methods to select $\s{A}_0$. Further discussion on \eqref{3} can be seen in Section \ref{sec:db}.

	\section{Asymptotic validity of linear screening}\label{sec:ls}

	Given a pre-specified integer $M$ with $p_0\leqslant M\ll n$, our purpose is to find a $M$-subset of $Z_p$ that includes the true submodel $\s{A}_0$ of $f_n$ in \eqref{f}. From the discussion in the previous section, $\s{A}_0$ is also the true submodel of the BLA of $f_n$ under certain conditions. This inspires us to use screening methods for linear models to the nonlinear model $f_n$.
	
	Some definitions and notation are needed here. For a vector $\v{x}$, $\left\|\v{x}\right\|$ and $\left\|\v{x}\right\|_0$ denote its Eucildean norm and $\ell_0$ norm, respectively. For a matrix $\m{A}$, $\|\m{A}\|$ denotes its spectral norm. For a set $\s{S}$, $|\s{S}|$ denotes its cardinality. For $\s{A}\subset Z_p$, let $\v{x}_\s{A}$ and $\m{X}_\s{A}$ denote the subvector of $\v{x}$ and the submatrix of $\m{X}$ corresponding to $\s{A}$, and define
	\begin{align}\label{betaA}
	\beta_0(\s{A})+\vg{\beta}\left(\s{A}\right)'\v{x}_{\s{A}}=\mathop{\arg\min}_{g\in \{\phi_0+{\vg{\phi}}'\v{x}_{\s{A}}:\ \phi_0 \in \mathbb{R},\ \vg{\phi} \in \mathbb{R}^{|A|}\}}\int_{[0,1]^p}\left[f(\v{x})-g(\v{x}_\s{A})\right]^2d\v{x}.
	\end{align}
	Note that $\vg{\beta}(\s{A})=(\beta_1(\s{A}),\cdots,\beta_{|\s{A}|}(\s{A}))'$ is a $|\s{A}|$-dimensional vector. Let
	\begin{align}\label{betaZp}
	\vg{\beta}_{Z_p}(\s{A})=({\beta}_{Z_p}(\s{A})_1,\cdots,{\beta}_{Z_p}(\s{A})_p)'
	\end{align} be the $p$-dimensional vector obtained by expanding $\vg{\beta}(\s{A})$ with
	$\vg{\beta}_{Z_p}(\s{A})_{\s{A}}=\vg{\beta}(\s{A})$ and $\vg{\beta}_{Z_p}(\s{A})_j=0$ for $j\notin\s{A}$.
	Based on the data $(\m{X},\v{y})$ generated from \eqref{f}, for $|\s{A}|<n$, $(\beta_0(\s{A}),\vg{\beta}(\s{A})')'$ in \eqref{betaA} can be estimated by the least squares method under the linear model \eqref{lm},
	\begin{align}\label{1}
	\begin{pmatrix}
	\widehat{\beta}_0(\s{A})\\
	\widehat{\vg{\beta}}(\s{A})
	\end{pmatrix}=\begin{pmatrix}
	n&&\vg{1}_n'\m{X}_\s{A}\\
	\m{X}_\s{A}'\vg{1}_n&&\m{X}_\s{A}'\m{X}_\s{A}
	\end{pmatrix}^{-1}\begin{pmatrix}
	\vg{1}_n' \v{y}\\\m{X}'_\s{A} \v{y}
	\end{pmatrix},
	\end{align}and $\widehat{\vg{\beta}}_{Z_p}(\s{A})$ can be defined similarly.
	
	Let $V_{HK}(f)$ denote the Hardy-Krause variation (see Definition A.1 in the Supplementary Materials) of $f$. If $V_{HK}(f)<\infty$, then we say that $f$ has bounded variation in Hardy-Krause sense (BVHK), also write $f\in \mathrm{BVHK}$. For a design $\m{A}=(\v{a}_1,\cdots,\v{a}_n)'$ with $\v{a}_1,\ldots,\v{a}_n \in [0,1]^d$, let $\delta_{d,n}(\m{A})$ denote its $L_\infty$ discrepancy (see Definition A.2 in the Supplementary Materials).
	
	\begin{assumption}\label{assumption 2.2} For each $n$, $\widetilde{f}_n\in \mathrm{BVHK}$.
	\end{assumption}
	If $\widetilde{f}_n$ is sufficiently smooth on $[0,1]^{p_0}$, then Assumption \ref{assumption 2.2} holds (see Lemma 3 in the Supplementary Materials).
	Define\begin{align*}
	&\s{U}_0=\{\s{A}\subset Z_p:|\s{A}|=M,\s{A}_0\subset \s{A}\},\ \s{U}_1=\{\s{A}\subset Z_p:|\s{A}|=M,\s{A}_0 \setminus  \s{A}\neq \emptyset \},\\
	&\s{U}_2=\{\s{A} \subset Z_p: |\s{A}|=2\},\ \s{U}_3=\{\s{A} \subset Z_p: \s{A}=\s{A}_0 \cup I_1,|I_1|=1, I_1 \in Z_p\setminus \s{A}_0\},\\
	&\s{U}_4=\{\s{A} \subset Z_p: \s{A}=\s{A}_0 \cup I_2,|I_2|=M-1, I_2 \in Z_p\setminus \s{A}_0\},\\
	&\Delta_{p_0,n}=\delta_{p_0,n}(\m{X}_{\s{A}_0}),\ \Delta_{2,n}=\max_{\s{A} \in \s{U}_2}\delta_{2,n}(\m{X}_\s{A}),\\
	&\Delta_{p_0+1,n}=\max_{\s{A} \in \s{U}_3}\delta_{p_0+1,n}(\m{X}_\s{A}),\ \Delta_{p_0+M-1,n}=\max_{\s{A} \in \s{U}_4}\delta_{p_0+M-1,n}(\m{X}_\s{A}).
	\end{align*}
	
	\begin{assumption} \label{assumption 2.4}
		There exists a constant $\alpha \in (0,1)$ such that $3(M+1)^2(9M+1)\Delta_{2,n}<\alpha$ for sufficiently large $n$.
	\end{assumption}	
	
	Write \begin{align*}
	&C_{1n}=\max\limits_{j=1,\cdots,p}\left(V_{HK}(\widetilde{f}_n),V_{HK}(x_j\widetilde{f}_n)\right),\ C_{2n}=\max\limits_{\v{x} \in [0,1]^p} \left|f_n(\v{x})\right|,
	\\	&C_{3n}=\max\limits_{\v{x} \in [0,1]^p}\left|\widetilde{f}_n(\v{x}_{\s{A}_0})-\v{x}'\vg{\beta}_{Z_p}({\s{A}_0})-\beta_0(\s{A}_0)\right|,\end{align*}
	 \begin{align*}&C_{4n}=\max\limits_{\s{A} \in \s{U}_1 }\max\limits_{\v{x} \in [0,1]^p}\left|\widetilde{f}_n(\v{x}_{\s{A}_0})-\v{x}'\vg{\beta}_{Z_p}({\s{A}})-\beta_0(\s{A})\right|,\\
	&V_{1n}=V_{HK}\left(\widetilde{f}_n(\v{x}_{\s{A}_0})-\beta_0(\s{A}_0)-\v{x}'\vg{\beta}_{Z_p}(\s{A}_0)\right),\ V_{2n}=\max_{\s{A} \in \s{U}_1}V_{HK}\left(\widetilde{f}_n(\v{x}_{\s{A}_0})-\beta_0(\s{A})-\v{x}'\vg{\beta}_{Z_p}(\s{A})\right),\\
	&\zeta_{1n}
	=C_{1n}(p_0+1)^2(9p_0+1)\Delta_{p_0,n}+2(p_0+1)^2(9p_0+1)\eta_n+\frac{3}{1-\alpha}C_{2n}(p_0+1)^4(9p_0+1)^2\Delta_{2,n},\\
	&\zeta_{2n}
	=C_{1n}(M+1)^2(9M+1)\Delta_{p_0+1,n}+2(M+1)^2(9M+1)\eta_n+\frac{3}{1-\alpha}C_{2n}(M+1)^4(9M+1)^2\Delta_{2,n},\\
	&\rho_{1n}= \zeta_{1n}^2+2(C_{3n}+\eta_n)	\zeta_{1n}+2(p_0+1)^{-1/2}\zeta_{1n}^2,\ \rho_{2n}=\zeta_{2n}^2+2(C_{4n}+\eta_n)	\zeta_{2n}+2(M+1)^{-1/2}\zeta_{2n}^2.
	\end{align*}
	
	\begin{assumption} \label{assumption 2.5}
		There exists a constant $D>0$ such that $12\tau^2-12\left(M-p_0+1\right)\eta_n^2-4\eta_n^2-4\eta_n\left(C_{4n}+C_{3n}\right)-\Delta_{p_0,n}V_{1n}-\Delta_{p_0+M-1,n}V_{2n}-\rho_{1n}-\rho_{2n}>D$  for sufficiently large $n$.
	\end{assumption}

	\begin{theorem}\label{THM4}
		Under Assumptions \ref{assumption 2.1}, \ref{assumption 2.2}, \ref{assumption 2.4}, and \ref{assumption 2.5},  for sufficiently large $n$,\begin{align*}\max\limits_{\s{A}\in\s{U}_0}\left\|\v{y}-\vg{1}_n\widehat{\beta}_0(\s{A})-\m{X}\widehat{\vg{\beta}}_{Z_p}({\s{A}})\right\|^2
		<\min\limits_{\s{A}_1\in\s{U}_1}\left\|\v{y}-\vg{1}_n\widehat{\beta}_0(\s{A}_1)-\m{X}\widehat{\vg{\beta}}_{Z_p}({\s{A}_1})\right\|^2.
		\end{align*}
	\end{theorem}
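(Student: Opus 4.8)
The plan is to convert the empirical least-squares comparison into a population comparison of best-linear-approximation (BLA) errors, and then to control the empirical-to-population gap by Koksma--Hlawka bounds. Throughout write $\widehat R(\s{A})=\|\v{y}-\vg{1}_n\widehat\beta_0(\s{A})-\m{X}\widehat{\vg{\beta}}_{Z_p}(\s{A})\|^2$ for the empirical residual sum of squares and $R(\s{A})=\int_{[0,1]^p}[f_n(\v{x})-\beta_0(\s{A})-\vg{\beta}_{Z_p}(\s{A})'\v{x}]^2\,d\v{x}$ for its population counterpart. The first ingredient is an exact population identity: under the uniform measure on $[0,1]^p$ the centered regressors $x_1-\tfrac12,\dots,x_p-\tfrac12$ are mutually orthogonal and orthogonal to the constant, so the population Gram matrix is diagonal and the restricted BLA coefficients do not depend on which other variables are retained, $\beta_j(\s{A})=\beta_j$ for $j\in\s{A}$. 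Writing $f_n=\beta_0+\sum_{j=1}^p\beta_j x_j+r_n$ for the full BLA and its residual $r_n$, a direct expansion using these orthogonalities gives, for every $\s{A}$,
\[
R(\s{A})=\|r_n\|_{L^2}^2+\frac{1}{12}\sum_{j\notin\s{A}}\beta_j^2 .
\]
The point is that $\|r_n\|^2$ is common to all subsets and cancels in any comparison.

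Comparing a generic $\s{A}_1\in\s{U}_1$ with the anchor $\s{A}_0$ and using Theorem~\ref{THM1} ($|\beta_j|>12\tau$ for $j\le p_0$ and $|\beta_j|<12\eta_n$ for $j>p_0$), the active variables dropped by $\s{A}_1$ contribute at least $(12\tau)^2/12$ each, while the inactive variables wastefully carried by $\s{A}_1$ cost at most $(12\eta_n)^2/12$ each. The binding case is a single dropped active variable, so carrying $M-p_0+1$ inactive variables, which yields the uniform population bound $R(\s{A}_1)-R(\s{A}_0)\ge 12\tau^2-12(M-p_0+1)\eta_n^2$ for all $\s{A}_1\in\s{U}_1$; this reproduces the leading terms of Assumption~\ref{assumption 2.5}.

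Next I would bridge to the empirical quantities. For the $\s{U}_0$ side I would use monotonicity of least-squares RSS in the regressor set: since every $\s{A}\in\s{U}_0$ contains $\s{A}_0$, $\max_{\s{A}\in\s{U}_0}\widehat R(\s{A})\le\widehat R(\s{A}_0)$, so only the single set $\s{A}_0$ must be controlled, which is why the error budget $\zeta_{1n},\rho_{1n}$ is built from the $\s{A}_0$-quantities $\Delta_{p_0,n},C_{3n},V_{1n}$. For both sides I would establish a two-sided estimate $|\tfrac1n\widehat R(\s{A})-R(\s{A})|\le\rho$ in two stages. In stage (i) I bound the deviation of the empirical least-squares coefficients from the BLA coefficients by writing the normal equations, approximating the empirical moment matrix and cross-moment vector by their integrals through the Koksma--Hlawka inequality (so the variations $C_{1n}$ of the products $x_jx_k$ and $x_jf_n$ enter), and inverting the perturbed matrix using Assumption~\ref{assumption 2.4} to keep the inverse uniformly bounded; this produces the $\zeta$-bounds. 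In stage (ii) I propagate the coefficient error into the RSS, expanding the square and controlling the cross terms against the residual sup-norms $C_{3n},C_{4n}$ and the extra $\eta_n$ coming from $f_n\neq\widetilde f_n$; this gives the $\rho$-bounds, while a further direct Koksma--Hlawka estimate of the integrated squared residual supplies the $\Delta_{p_0,n}V_{1n}$ and $\Delta_{p_0+M-1,n}V_{2n}$ terms.

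Finally I would assemble: combining $\max_{\s{U}_0}\widehat R\le n\bigl(R(\s{A}_0)+\rho_{1n}\bigr)$ with the uniform lower bound $\widehat R(\s{A}_1)\ge n\bigl(R(\s{A}_1)-\rho_{2n}\bigr)$ over $\s{A}_1\in\s{U}_1$ and the population gap above gives
\[
\min_{\s{A}_1\in\s{U}_1}\widehat R(\s{A}_1)-\max_{\s{A}\in\s{U}_0}\widehat R(\s{A})\ge n\Bigl(12\tau^2-12(M-p_0+1)\eta_n^2-\cdots-\rho_{1n}-\rho_{2n}\Bigr)>nD>0,
\]
where the omitted terms are exactly the remaining entries on the left-hand side of Assumption~\ref{assumption 2.5}; since $n>0$ the factor is harmless. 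The main obstacle I anticipate is stage (i): obtaining coefficient-deviation bounds that hold uniformly over the exponentially many candidate subsets while keeping the perturbed empirical Gram matrix invertible and well conditioned. The orthogonality of the centered monomials is what makes this tractable, reducing the conditioning question to pairwise discrepancies (Assumption~\ref{assumption 2.4} uses $\Delta_{2,n}$), and the auxiliary families $\s{U}_2,\s{U}_3,\s{U}_4$ are precisely the index sets over which the requisite discrepancy maxima are taken when estimating the moment matrices and the integrated residuals on unions of candidate and true variables; verifying that all resulting error terms are genuinely of smaller order than the signal $12\tau^2$ is the delicate bookkeeping that Assumption~\ref{assumption 2.5} is designed to encapsulate.
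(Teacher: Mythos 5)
Your proposal is correct and follows essentially the same route as the paper's own proof: the subset-independence of the population BLA coefficients and the resulting population gap over $\s{U}_1$ (the paper's Lemmas \ref{lemma6} and \ref{lemma:a0}, which you obtain via orthogonality of the centered regressors), the empirical-to-population bridge combining Koksma--Hlawka bounds with matrix-perturbation control of the empirical Gram matrix under Assumption \ref{assumption 2.4} (the paper's Lemma \ref{lemma:cof} and its propagation into the residual sums of squares), monotonicity of the RSS over $\s{U}_0$, and the final assembly via Assumption \ref{assumption 2.5}. The only difference is cosmetic: your closed-form identity $R(\s{A})=\|r_n\|_{L^2}^2+\tfrac{1}{12}\sum_{j\notin\s{A}}\beta_j^2$ packages the expansion in the paper's Lemma \ref{lemma:a0} more compactly.
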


	Xiong (2014) and Xu and Chen (2014)	presented similar results to Theorem \ref{THM4} for high-dimensional linear regression models: a subset that includes the true submodel always yields a smaller residual sum of squares than those that do not. Therefore, we can screening important variables in linear models through solving the $\ell_0$-constrained least squares problem
	\begin{equation}\label{bsr}\min_{\beta_0\in\mathbb{R},\ \vg{\beta}\in\mathbb{R}^p}\left\|\v{y}-\vg{1}_n\beta_0-\m{X}\vg{\beta}\right\|^2\quad{\text{subject to}}\ \left\|\vg{\beta}\right\|_0\leqslant M.\end{equation}
	A sub-optimal solution to \eqref{bsr} can still include the true submodel $\s{A}_0$ (Xiong 2014). The famous $\ell_1$-regularized method (lasso) (Tibshirani 1996) is a convex approximation to \eqref{bsr}. A number of papers provided efficient algorithms for solving \eqref{bsr} and showed that this $\ell_0$ method can be preferable over the $\ell_1$-regularized and other methods in variable selection/screening for linear models from theoretical and/or empirical aspects (Shen et al. 2013; Xiong 2014; Xu and Chen 2014; Bertsimas, King, and Mazumder 2016). Theorem \ref{THM4} indicates under certain conditions that the $\ell_0$ screening method for linear models is still effective for the nonlinear computer model \eqref{f}: when the residual sum of squares becomes small to some level, the corresponding subset includes the true submodel asymptotically. Hence, we propose to screen the true submodel $\s{A}_0$ of \eqref{f} based on algorithms for solving \eqref{bsr}.

	Assumptions \ref{assumption 2.4} and \ref{assumption 2.5} are not easy to verify in practice. The following theorem provides sufficient conditions for them.
	
	\begin{theorem} \label{thm:as}
		Suppose that $\Delta_{2,n}=\Delta_{p_0+1,n}=\Delta_{p_0+M-1,n}=\Delta_{p_0,n}=O(n^{-\gamma_0}),\  \eta_n=O(n^{-\gamma_1}),\\ V_{1n}=O(n^{\gamma_2}),\ V_{2n}=O(n^{\gamma_3}),\ C_{1n}=O(n^{\gamma_4}),\ C_{2n}=O(n^{\gamma_5}),\ C_{3n}
		=O(n^{\gamma_6}),\ C_{4n}=O(n^{\gamma_7}),\ p_0=O(n^{\gamma_8}),\ M=O(n^{\gamma_9})$, where
		$\gamma_1,\gamma_8,\gamma_9>0,\ \gamma_8<\gamma_9,\ 3\gamma_9<\gamma_0,\ \gamma_9<2\gamma_1,\ \gamma_7<\gamma_1,\ \gamma_6<\gamma_1,\ \gamma_2<\gamma_0,\ \gamma_3<\gamma_0,\ 3\gamma_9+\gamma_4<\gamma_0,\ 3\gamma_9<\gamma_1,\ 6\gamma_9+\gamma_5<\gamma_0,\ 3\gamma_8+\gamma_4+\gamma_6<\gamma_0,\ 3\gamma_8+\gamma_6<\gamma_1,\ 6\gamma_8+\gamma_5+\gamma_6<\gamma_0,\
		3\gamma_9+\gamma_4+\gamma_7<\gamma_0,\ 3\gamma_9+\gamma_7<\gamma_1,\ 6\gamma_9+\gamma_5+\gamma_7<\gamma_0$. Then Assumptions \ref{assumption 2.4} and \ref{assumption 2.5} hold.
	\end{theorem}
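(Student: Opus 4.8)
The plan is to substitute the assumed polynomial rates into every quantity appearing in Assumptions \ref{assumption 2.4} and \ref{assumption 2.5}, reduce each inequality to a comparison of exponents of $n$, and show that every ``bad'' (subtracted or bounding) term is $o(1)$ while the single dominant good term stays bounded away from zero. Throughout I would exploit $\gamma_8<\gamma_9$, so that any bound phrased with the exponent $\gamma_9$ (coming from $M$) automatically dominates its $p_0$-analogue (coming from $p_0$); this lets several $M$-indexed hypotheses also control the corresponding $p_0$-indexed terms without separate assumptions.

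For Assumption \ref{assumption 2.4} the computation is immediate: with $M=O(n^{\gamma_9})$ and $\Delta_{2,n}=O(n^{-\gamma_0})$ we get $3(M+1)^2(9M+1)\Delta_{2,n}=O(n^{3\gamma_9-\gamma_0})$, which tends to $0$ because $3\gamma_9<\gamma_0$. Hence for any fixed $\alpha\in(0,1)$ the inequality holds for all sufficiently large $n$; this simultaneously fixes the constant $\alpha$ entering the factors $3/(1-\alpha)$ in the definitions of $\zeta_{1n}$ and $\zeta_{2n}$.

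For Assumption \ref{assumption 2.5} I would isolate the positive constant $12\tau^2$ and show that each of the remaining seven subtracted terms tends to $0$, so that taking any $D\in(0,12\tau^2)$ finishes the proof. The five elementary terms $12(M-p_0+1)\eta_n^2$, $4\eta_n^2$, $4\eta_n(C_{4n}+C_{3n})$, $\Delta_{p_0,n}V_{1n}$ and $\Delta_{p_0+M-1,n}V_{2n}$ carry exponents at most $\gamma_9-2\gamma_1$, $-2\gamma_1$, $\max(\gamma_7,\gamma_6)-\gamma_1$, $\gamma_2-\gamma_0$ and $\gamma_3-\gamma_0$, each negative by $\gamma_9<2\gamma_1$, $\gamma_1>0$, $\gamma_7<\gamma_1$, $\gamma_6<\gamma_1$, $\gamma_2<\gamma_0$ and $\gamma_3<\gamma_0$. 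The substantive work is in $\rho_{1n}$ and $\rho_{2n}$, which I would attack by first expanding $\zeta_{1n}$ and $\zeta_{2n}$ into their three constituent monomials. The three exponents of $\zeta_{2n}$ are $\gamma_4+3\gamma_9-\gamma_0$, $3\gamma_9-\gamma_1$ and $\gamma_5+6\gamma_9-\gamma_0$, all negative by $3\gamma_9+\gamma_4<\gamma_0$, $3\gamma_9<\gamma_1$ and $6\gamma_9+\gamma_5<\gamma_0$; the three exponents of $\zeta_{1n}$ are the same with $\gamma_9$ replaced by $\gamma_8$, which remain negative since $\gamma_8<\gamma_9$. Thus $\zeta_{1n},\zeta_{2n}\to0$, so the squares $\zeta_{1n}^2,\zeta_{2n}^2$ and their $(p_0+1)^{-1/2}$- and $(M+1)^{-1/2}$-weighted copies vanish a fortiori, as do $\eta_n\zeta_{1n}$ and $\eta_n\zeta_{2n}$.

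The genuinely delicate bookkeeping, and the step I expect to be the main obstacle, is controlling the cross terms $C_{3n}\zeta_{1n}$ and $C_{4n}\zeta_{2n}$: these multiply the factor $C_{3n}=O(n^{\gamma_6})$ (respectively $C_{4n}=O(n^{\gamma_7})$), which need not tend to zero, against the three monomials of $\zeta_{1n}$ (respectively $\zeta_{2n}$), producing six products whose exponents must each be verified negative. For $C_{3n}\zeta_{1n}$ the three exponents $\gamma_6+\gamma_4+3\gamma_8-\gamma_0$, $\gamma_6+3\gamma_8-\gamma_1$ and $\gamma_6+\gamma_5+6\gamma_8-\gamma_0$ are exactly the quantities rendered negative by $3\gamma_8+\gamma_4+\gamma_6<\gamma_0$, $3\gamma_8+\gamma_6<\gamma_1$ and $6\gamma_8+\gamma_5+\gamma_6<\gamma_0$; symmetrically, $C_{4n}\zeta_{2n}$ is governed by $3\gamma_9+\gamma_4+\gamma_7<\gamma_0$, $3\gamma_9+\gamma_7<\gamma_1$ and $6\gamma_9+\gamma_5+\gamma_7<\gamma_0$. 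Matching each of these products to its designated inequality is the crux, and the long list of hypotheses is engineered precisely so that every monomial arising from the full expansion of $\rho_{1n}$ and $\rho_{2n}$ carries a negative exponent. Once all seven subtracted terms are shown to be $o(1)$, the left-hand side of the inequality in Assumption \ref{assumption 2.5} converges to $12\tau^2>0$ and hence exceeds any fixed $D\in(0,12\tau^2)$ for large $n$, completing the verification.
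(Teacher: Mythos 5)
Your proposal is correct and follows essentially the same route as the paper: substitute the polynomial rates, verify Assumption \ref{assumption 2.4} from $3\gamma_9<\gamma_0$, then show each subtracted term in Assumption \ref{assumption 2.5} is $o(1)$ by matching every monomial of $\zeta_{1n}$, $\zeta_{2n}$, and the cross terms $C_{3n}\zeta_{1n}$, $C_{4n}\zeta_{2n}$ to its designated exponent inequality (using $\gamma_8<\gamma_9$ to let the $M$-rate hypotheses cover the $p_0$-rate terms), so that the left side converges to $12\tau^2>0$. The only cosmetic difference is that the paper fixes $\alpha=1/2$ rather than leaving $\alpha$ arbitrary, which changes nothing.
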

	
	Furthermore, note that the inputs of computer models can be designed. We next show that, for fixed $p_0$ and $M$, if the inputs are generated by simple random sampling, then the two assumptions, and thus Theorem \ref{THM4}, hold with a probability tending to one.

	\begin{assumption}\label{assumption rs} The design matrix $\m{X}$ is generated by simple random sampling, i.e., $x_{ij},\ i=1,\ldots,n,\ j=1,\ldots, p$, are independently identically distributed from uniform distribution on $[0,1)$.
	\end{assumption}

	\begin{assumption}\label{assumption gamma}
		Let $p_0$ and $M$ be fixed with $p_0\leqslant M$. As $n\to \infty$, $\eta_n=O(n^{-\gamma_1}),\ V_{1n}=O(n^{\gamma_2}),\ V_{2n}=O(n^{\gamma_3}),\ C_{1n}=O(n^{\gamma_4}),\ C_{2n}=O(n^{\gamma_5}),\ C_{3n}
		=O(n^{\gamma_6}),\ C_{4n}=O(n^{\gamma_7}),\ \log p=O(n^{\gamma_{10}})$, where
		$\gamma_1,\gamma_{10}>0,\ \gamma_{10}<1, \  \gamma_2<\tilde{\gamma}_0=(1-\gamma_{10})/2,\ \gamma_3<\tilde{\gamma}_0,\ \gamma_4<\tilde{\gamma}_0,\
		\gamma_5<\tilde{\gamma}_0,\
		\gamma_7<\gamma_1,\ \gamma_6<\gamma_1,\ \gamma_4+\gamma_6<\tilde{\gamma}_0,\ \gamma_5+\gamma_6<\tilde{\gamma}_0,\ \gamma_4+\gamma_7<\tilde{\gamma}_0,\ \gamma_5+\gamma_7<\tilde{\gamma}_0 $.
	\end{assumption}
	
	Such $\gamma_1,\ldots,\gamma_7,\gamma_{10}$ in Assumption \ref{assumption gamma} exist. For example, take $\gamma_1=0.5,\gamma_2=\gamma_3=0.2,\gamma_4=\gamma_5=0.15,\gamma_6=\gamma_7=0.05,\gamma_{10}=0.5$.
	
	\begin{theorem}\label{THMpr}
		Under Assumptions \ref{assumption 2.1}, \ref{assumption 2.2}, \ref{assumption rs} and \ref{assumption gamma}, as $n\to\infty$,\begin{align*}P\left(\max\limits_{\s{A}\in\s{U}_0}\left\|\v{y}-\vg{1}_n\widehat{\beta}_0(\s{A})-\m{X}\widehat{\vg{\beta}}_{Z_p}({\s{A}})\right\|^2
		<\min\limits_{\s{A}_1\in\s{U}_1}\left\|\v{y}-\vg{1}_n\widehat{\beta}_0(\s{A}_1)-\m{X}\widehat{\vg{\beta}}_{Z_p}({\s{A}_1})\right\|^2\right)\to1.
		\end{align*}
	\end{theorem}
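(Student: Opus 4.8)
The plan is to reduce the probabilistic statement to the deterministic Theorem \ref{THM4}. Because $p_0$ and $M$ are fixed, every combinatorial factor occurring in $\zeta_{1n},\zeta_{2n},\rho_{1n},\rho_{2n}$ and in Assumption \ref{assumption 2.4} is a bounded constant, so the powers of $n$ that govern these quantities are exactly those isolated in Theorem \ref{thm:as} after setting $\gamma_8=\gamma_9=0$ (the exponents of $p_0$ and $M$). Inspecting the conditions of Theorem \ref{thm:as} under $\gamma_8=\gamma_9=0$ and comparing with Assumption \ref{assumption gamma}, one sees they coincide once the discrepancy rate $\gamma_0$ is identified with $\tilde{\gamma}_0=(1-\gamma_{10})/2$. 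Hence it suffices to show that, under random sampling, all four discrepancies $\Delta_{2,n},\Delta_{p_0+1,n},\Delta_{p_0+M-1,n},\Delta_{p_0,n}$ are $O(n^{-\gamma_0'})$ with probability tending to one, for a fixed $\gamma_0'$ chosen strictly between $\max(\gamma_2,\gamma_3,\gamma_4,\gamma_5,\gamma_4+\gamma_6,\gamma_5+\gamma_6,\gamma_4+\gamma_7,\gamma_5+\gamma_7)$ and $\tilde{\gamma}_0$; such a $\gamma_0'$ exists precisely because Assumption \ref{assumption gamma} makes each of these sums strictly smaller than $\tilde{\gamma}_0$.

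To control the discrepancies I would use concentration of the $L_\infty$ discrepancy of $n$ i.i.d.\ uniform points in a fixed dimension $d$. Replacing one design point changes $\delta_{d,n}$ by at most $1/n$, so McDiarmid's bounded-difference inequality gives $P(\delta_{d,n}\ge \mathbb{E}\,\delta_{d,n}+t)\le\exp(-2nt^2)$; combined with the classical bound $\mathbb{E}\,\delta_{d,n}=O(\sqrt{d/n})$ for fixed $d$, and using $\gamma_0'<\tilde{\gamma}_0<1/2$, this yields $P(\delta_{d,n}>n^{-\gamma_0'})\le\exp(-c\,n^{1-2\gamma_0'})$ for large $n$. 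Each of $\Delta_{2,n},\Delta_{p_0+1,n},\Delta_{p_0+M-1,n}$ is a maximum over a family of at most $p^{M}$ subsets (while $\Delta_{p_0,n}$ is a single discrepancy), so a union bound gives, e.g.,
\begin{align*}
P\left(\Delta_{p_0+M-1,n}>n^{-\gamma_0'}\right)\le p^{M}\exp\left(-c\,n^{1-2\gamma_0'}\right)=\exp\left(M\log p-c\,n^{1-2\gamma_0'}\right).
\end{align*}
Since $\log p=O(n^{\gamma_{10}})$ and $1-2\gamma_0'>1-2\tilde{\gamma}_0=\gamma_{10}$, the exponent diverges to $-\infty$, so this probability, and likewise the other three, tends to zero.

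Let $E_n$ be the event that all four discrepancies are at most $n^{-\gamma_0'}$; the previous step gives $P(E_n)\to1$. On $E_n$ the four discrepancies obey $O(n^{-\gamma_0'})$ deterministically, so by the exponent bookkeeping of the first paragraph $\Delta_{2,n}\to0$ forces Assumption \ref{assumption 2.4}, while $\Delta_{p_0,n}V_{1n}$, $\Delta_{p_0+M-1,n}V_{2n}$, $\eta_n(C_{3n}+C_{4n})$, $\rho_{1n}$ and $\rho_{2n}$ all vanish, leaving $12\tau^2$ as the dominant term in Assumption \ref{assumption 2.5}. Together with Assumptions \ref{assumption 2.1} and \ref{assumption 2.2}, which hold by hypothesis, all four hypotheses of Theorem \ref{THM4} are in force on $E_n$ for large $n$. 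The proof of Theorem \ref{THM4} then furnishes a deterministic threshold $N_0$ such that, for $n\ge N_0$, $E_n$ is contained in the event that the claimed inequality holds; consequently $P(\text{inequality})\ge P(E_n)\to1$, which is the assertion.

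The step I expect to be the main obstacle is the concentration/union-bound balance of the second paragraph: the family $\mathcal{U}_4$ has about $p^{M}$ members, whose logarithm is of order $n^{\gamma_{10}}$, exactly the order of the naive exponent $n^{1-2\tilde{\gamma}_0}$ obtained at the borderline rate $\tilde{\gamma}_0$. The definition $\tilde{\gamma}_0=(1-\gamma_{10})/2$ is calibrated so that these two effects are comparable, and the argument closes only because the strict inequalities of Assumption \ref{assumption gamma} leave room to pick $\gamma_0'<\tilde{\gamma}_0$ for which the concentration exponent $n^{1-2\gamma_0'}$ strictly dominates $M\log p$. Securing a clean, rigorously justified tail bound for the $L_\infty$ discrepancy in general fixed dimension, with a harmless constant or polynomial prefactor, is the one genuinely technical ingredient; the rest is the routine verification that the vanishing rates match the conditions already encoded in Theorem \ref{thm:as}.
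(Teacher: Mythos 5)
Your proposal is correct and its architecture is the same as the paper's: the paper also fixes a rate $\gamma_0$ strictly between all the relevant exponent sums ($\gamma_2,\gamma_3,\gamma_4,\gamma_5,\gamma_4+\gamma_6,\gamma_5+\gamma_6,\gamma_4+\gamma_7,\gamma_5+\gamma_7$) and $\tilde{\gamma}_0=(1-\gamma_{10})/2$, proves via a union bound over subsets that all four discrepancies $\Delta_{p_0,n},\Delta_{2,n},\Delta_{p_0+1,n},\Delta_{p_0+M-1,n}$ are at most $n^{-\gamma_0}$ with probability tending to one (its Lemma~7), checks on that event that Assumption~\ref{assumption 2.4} holds and that every term competing with $12\tau^2$ in Assumption~\ref{assumption 2.5} vanishes, and then invokes Theorem~\ref{THM4}. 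The single point of divergence is the concentration ingredient you flagged as the technical crux: the paper does not use McDiarmid at all, but cites the Kiefer--Wolfowitz (1958) inequality, which directly gives $P\left(\sup_{\v{x}}|F(\v{x})-F_n(\v{x})|\leqslant r\right)\geqslant 1-c_0\exp(-cnr^2)$ for i.i.d.\ samples in any fixed dimension, uniformly in $F$ and $r$. Your route---bounded differences of size $1/n$ plus the VC-type bound $\mathbb{E}\,\delta_{d,n}=O(\sqrt{d/n})$---delivers the same tail $\exp(-c\,n^{1-2\gamma_0})$ and closes the same race against $M\log p=O(n^{\gamma_{10}})$, so it is a sound substitute; what it costs is that the expectation bound itself needs a separate justification (a VC/chaining argument or a citation such as Heinrich et al.\ 2001), whereas the paper's citation is a ready-made two-sided exponential bound, which is why its Lemma~7 is shorter. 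One cosmetic caveat: you cannot literally quote Theorem~\ref{thm:as} with $\gamma_8=\gamma_9=0$, since its hypothesis $\gamma_8<\gamma_9$ fails there; but, as in the paper, you only reuse its bookkeeping on the high-probability event rather than the theorem statement itself, so nothing is lost.
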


Note that screening can be viewed as a step of data preprocessing for high-dimensional data. One should conduct further steps of variable selection or sensitivity analysis to remove redundant variables after screening $M$ variables (Fan and Lv 2008). Therefore, the selection of $M$ is not very crucial. Fan and Lv (2008) suggested $M=n/\log(n)$. Cross-validation methods can also be used to specify $M$.

	\section{Use of other basis functions}\label{sec:db}
		
	The key point why linear screening methods are valid for nonlinear model \eqref{f} is \eqref{3} in Assumption	\ref{assumption 2.1}, which guarantees that the active variables of \eqref{f} are still active in its BLA. If this assumption does not hold, i.e., for some $j\in Z_{p_0}$,
	\begin{align}\label{e0}\int_{[0,1]^p}x_jf(\v{x})d\v{x}-\frac{1}{2}\int_{[0,1]^p} f(\v{x})d\v{x}=0,
	\end{align}  then we cannot select the $j$th active variable by linear screening. In fact, the possibility of the extreme case \eqref{e0} is usually negligible in practice. In modeling for computer experiments, $f$ is usually assumed to be a realization from a Gaussian process (Santner, Williams, and Notz 2018), and thus the probability that \eqref{e0} occurs is zero. In general, \eqref{e0} occurs only for some artificial functions. For example, in one dimension, \eqref{e0} occurs for $f(x)=10(x-{1}/{2})^2$; see Figure 1. It will be shown from our numerical results in Section \ref{Sec:numerical_exp} that the proposed linear screening methods perform quite well for most practical cases. Even so, we now present methods to handle the extreme cases when \eqref{e0} occurs.
	
	\begin{figure}[t]\label{fig:compare}
		\scalebox{0.6}[0.6]{\includegraphics{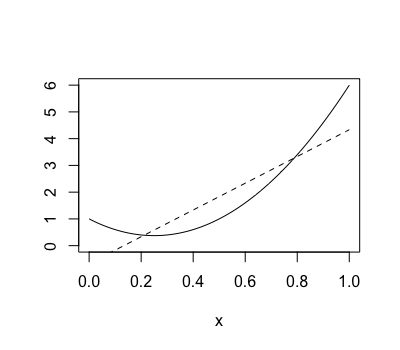}}
		\scalebox{0.6}[0.6]{\includegraphics{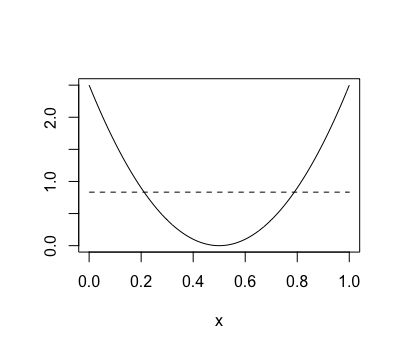}}
		\caption{Solid lines and dashed lines represent the original functions and the corresponding BLA's, respectively. On the left-hand side and right-hand side, $f$ is taken as $10x^2-5x+1$ and $10(x-{1}/{2})^2$, respectively, and the latter yields a case where \eqref{e0} occurs.}
	\end{figure}
	
	\begin{figure}[t]\label{fig:exm}
		\begin{center}
			\scalebox{0.6}[0.6]{\includegraphics{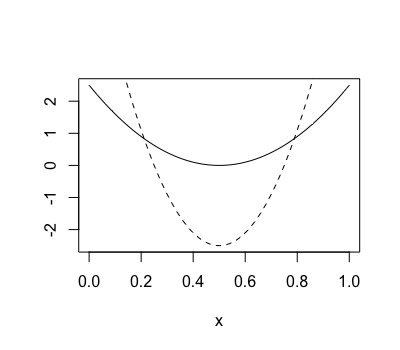}}
		\end{center}
		\caption{The solid line and dashed line represent the original function $f(x)=10(x-{1}/{2})^2$ and its general BLA with $b(x)=-4x^2+4x-{2}/{3}$, respectively. }
	\end{figure}
	
	Consider a general form of \eqref{lm},\begin{equation}\label{glm}y=\phi_0+\vg{\phi}'\v{b}(\v{x})+\epsilon,\end{equation}where $\v{b}(\v{x})=\left(b(x_1),\ldots,b(x_p)\right)'$ are pre-specified basis functions and $b\in C[0,1]$. It is clear that \eqref{lm} is a special case when taking $b(x)=x$. Similarly, the general BLA of $f$ based on the basis function $\v{b}(\v{x})$ is defined as\begin{align*}
	\beta_0+\vg{\beta}'\v{b}(\v{x})
	=\mathop{\arg\min}_{g\in \{\phi_0+{\vg{\phi}}'\v{b}(\v{x}):\ \phi_0 \in \mathbb{R},\ \vg{\phi} \in \mathbb{R}^p\}}\int_{[0,1]^p}\left[f(\v{x})-g(\v{x})\right]^2d\v{x},
	\end{align*}and the corresponding linear screening methods can be established by using \eqref{glm} to model the data from \eqref{f}. Here the general linear screening methods do not work only when for some $j\in Z_{p_0}$,
	\begin{align}\label{ge0}\int_{[0,1]^p}b(x_j)f(\v{x})d\v{x}-\int_{[0,1]} b(x_j)d{x_j}\int_{[0,1]^p} f(\v{x})d\v{x}=0,
	\end{align}	Therefore, when \eqref{e0} occurs, the linear screening method can still be valid with different $b$ that avoids the occurrence of \eqref{ge0}. For example, for the above function $f(x)=10(x-{1}/{2})^2$ that leads to \eqref{e0}, we can use the quadratic basis function, \begin{equation}\label{b}b(x)=-4x^2+4x-{2}/{3},\end{equation} which is orthogonal to $b(x)=x$, and the general linear screening method is valid with this basis; see Figure 2.

From the above discussion, we present a two-stage strategy to improve the credibility of screening results. In the two stages, we use the linear basis $b(x)=x$ and the quadratic basis \eqref{b} in our (general) linear screening method, respectively, and then combine the results from the two stages.

Here we give further discussions on the selection of basis functions. First, we take a simple quadratic function for example to compare the linear and quadratic basis functions.
Suppose the quadratic function is $$f(x_1,x_2)=\beta_0+\beta_1x_1+\beta_2x_2+\beta_3x_1^2+\beta_4x_2^2+\beta_5x_1x_2,\ (x_1,x_2)'\in[0,1]^2.$$
We calculate the coefficients of $x_1$ and $x_2$ in the (general) BLAs with respect to the the linear and quadratic basis functions, respectively, and the corresponding approximation losses, where the approximation loss is defined as
\begin{align*}
L(f) = \int_{[0,1]^2}\left[f(\v{x})-g(\v{x})\right]^2d\v{x}
\end{align*}
and $g(\v{x})$ denotes the corresponding (general) BLA. The results are displayed in Table \ref{tab:loss}. Suppose only $x_1$ is active in $f$, i.e., $\beta_2=\beta_4=\beta_5=0$. We can see that there is at least one basis function that can select the active variable since the two equations, $\beta_1+\beta_3=0$ and $15\beta_1/16+\beta_3=0$, cannot simultaneously hold (otherwise $x_1$ is not active). In fact, when $\beta_1$ and $\beta_3$ are continuous random variables, say, normal random variables, either of the two basis functions can select the active variable with probability one. On the other hand, the coefficients of $x_1$ from the two basis functions are similar, and that the approximation loss of the linear basis is even better in many cases, depending on the values of $\beta_1$ and $\beta_3$. This indicates that the linear basis seems comparable to the quadratic basis even for quadratic functions in our linear approximation method. This is one reason why the simple linear basis should be tried at first.

\begin{table}[t]
	\centering\footnotesize
	\footnotesize
	\caption{Comparison of the two basis functions for the quadratic function}\label{tab:loss}
	\begin{tabular}{lcc}
		\hline \hline
		
		&  linear basis& quadratic basis  \\
		\hline
		coefficient of $x_1$ &$\beta_1+\beta_3+\beta_5/2$  &$15\beta_1/16+\beta_3+15\beta_5/32 $  \\
		coefficient of $x_2$ & $ \beta_2+\beta_4+\beta_5/2$ & $15\beta_2/16+\beta_4+15\beta_5/32  $  \\
		$L(f)$ & $\beta_3^2/180+\beta_4^2/180+\beta_5^2/144$ & $\beta_1^2/192+\beta_2^2/192 +11\beta_5^2/1152+\beta_1\beta_5/192+\beta_2\beta_5/192$  \\
		\hline
	\end{tabular}\\[3mm]
\end{table}

Second, from Table \ref{tab:loss} we can see that, when $x_1$ and $x_2$ are both active, the linear and quadratic basis functions both fail for $\beta_0=1/4,\ \beta_1=\beta_2=-1/2,\ \beta_3=\beta_4=0$, and $\beta_5=1$, which corresponds to $f(x_1,x_2)=(x_1-1/2)(x_2-1/2)$. In fact, none of basis functions works for this function. In other words, when using the nonparametric additive model (Ruppert, Wand, and Carroll 2003), $$y=\phi_0+b_1(x_1)+\cdots+b_p(x_p)+\epsilon,$$to substitute \eqref{glm}, where $b_1,\ldots,b_p\in C[0,1]$ are unspecified basis functions, the corresponding screening method can still be invalid for some $f$; the proof can be found in Section D of the Supplementary Materials. Off course such functions are much rarer than those satisfying \eqref{ge0} for a specific basis function.

Third, by Theorem \ref{THM1}, it is desirable for selecting the active variable that the absolute value of the left part in \eqref{ge0} is as large as possible. This inspires us to consider the selection of the optimal $b$.
Without loss of generality, let $\int_{[0,1]^p} f(\v{x})d\v{x}=0$, and then the optimal $b$ should maximize $\left|\int_{[0,1]^p}b(x_j)f(\v{x})d\v{x}\right|$ for $j=1,\ldots,p$. It can be seen that we cannot specify the optimal basis since it depends on the unknown model $f$. Overall, it seems impossible to select a ``perfect" basis function. Relatively simple linear and quadratic basis functions are satisfactory for our problem.

	\section{Numerical experiments}\label{Sec:numerical_exp}
	
	\subsection{Three test functions}\label{subsection:3f}
		
	This subsection considers the following test functions on $[0,1]^p$,	
	\begin{eqnarray*}&&\mathrm{(I)}\ f(\v{x})=\sum\limits_{j=1}^p jx_j^2,\\
		&&\mathrm{(II)}\ f(\v{x})=-20\exp \left(-\dfrac{1}{5}\sqrt{\dfrac{1}{p}\sum\limits_{j=1}^{p}x_j^2}\right)-\exp \left	 (\dfrac{1}{p}\sum\limits_{j=1}^p 2 \pi x_j\right)+20+\exp(1),
		\\&&\mathrm{(III)}\ f(\v{x})=\left(\sum\limits_{j=1}^px_j\right)\exp\left[-\sum\limits_{j=1}^p\sin\left(x_j^2\right)\right].\end{eqnarray*}
	Model $\mathrm{(I)}$ is known as weighted sphere model, model $\mathrm{(II)}$ is Ackley's model, and model $\mathrm{(III)}$ is Yang's model (Yang 2010). The design matrix $\m{X}$ is generated by simple random sampling. Three combinations of $(n,p,M)$ and two values of $p_0$ are considered; see Table \ref{tab:f}.
	
	Three screening methods for linear models are used in the proposed linear screening method: Fan and Lv (2008)'s sure independence screening, Tibshirani (1996)'s lasso, and Xiong (2014)'s fast orthogonalizing subset screening, which are denoted by L-SIS, L-Lasso, and L-FOSS, respectively. We use cross-validation to specify the tuning parameter in L-Lasso, and only keep the variables with the largest $M$ absolute values of coefficients if the number of selected variables is larger than $M$. The initial point in L-FOSS is taken as the L-Lasso solution. Two model-free screening methods, Zhu et al. (2011)'s sure independent ranking and screening (SIRS) and Li, Zhong, and Zhu (2012)'s sure independence screening procedure based on the distance correlation (DC-SIS), are compared with our linear screening methods.
	
	The coverage rates that the selected subset include the true submodel over 1000 repetitions are given in Table \ref{tab:f}. It can be seen from the table that the linear screening methods have better overall performance than the two model-free marginal screening methods. In particular, L-FOSS performs the best among all the methods since it is an effective algorithm for solving the $\ell_0$ problem \eqref{bsr}. An interesting finding is that, even for nonlinear models, the linear marginal L-SIS is better than the two model-free marginal methods for many cases.
	
	\begin{table}[t]
		\centering\footnotesize
		\footnotesize
		\caption{Coverage rates in Section \ref{subsection:3f}}\label{tab:f}
		\begin{tabular}{lcccccc }
			\hline \hline
			\multicolumn{7}{c}{function (\uppercase\expandafter{\romannumeral 1})}\\
			\hline
			&\multicolumn{2}{c}{$n=100,p=200,M=30$} &\multicolumn{2}{c}{$n=200,p=500,M=50$}&	\multicolumn{2}{c}{$n=100,p=1000,M=50$} \\
			\hline
			&  $p_0=5$& $p_0=10$  &  $p_0=5$& $p_0=10$ &   $p_0=5$& $p_0=10$ \\
			\hline
			SIRS & 0.297 &0.007  & 0.428&0.026& 0.136&0.000  \\
			DC-SIS & 0.307 & 0.008 & 0.492 & 0.026  & 0.149 &0.001   \\
			L-SIS & 0.355 & 0.014  & 0.543& 0.030  & 0.177 & 0.001 \\
			L-Lasso & 0.953 & 0.299 & 1.000& 0.481 & 0.846 & 0.056 \\
			L-FOSS & 0.988& 0.337  & 1.000& 0.597 & 0.897& 0.064 \\
			\hline
		\end{tabular}\\[3mm]
		
		\begin{tabular}{lcccccc }
			\hline \hline
			\multicolumn{7}{c}{function (\uppercase\expandafter{\romannumeral 2})}\\
			\hline
			&\multicolumn{2}{c}{$n=100,p=200,M=30$} &\multicolumn{2}{c}{$n=200,p=500,M=50$}&	\multicolumn{2}{c}{$n=100,p=1000,M=50$}\\
			\hline
			&  $p_0=5$& $p_0=10$  &  $p_0=5$& $p_0=10$ &  $p_0=5$& $p_0=10$ \\
			\hline
			SIRS & 0.941 &0.163  & 1.000 &0.777& 0.828 &0.022\\
			DC-SIS & 0.981 & 0.263  & 1.000 & 0.885 & 0.912 & 0.047\\
			L-SIS & 0.957 & 0.250  & 0.999& 0.863& 0.876& 0.046 \\
			L-Lasso & 0.982 & 0.705 & 1.000 & 0.997 & 0.908& 0.144 \\
			L-FOSS & 0.998& 0.808 & 1.000 & 1.000  & 0.986& 0.199\\
			\hline
		\end{tabular}\\[3mm]
		\begin{tabular}{lcccccc }
			\hline \hline
			\multicolumn{7}{c}{function (\uppercase\expandafter{\romannumeral 3})}\\
			\hline
			&\multicolumn{2}{c}{$n=100,p=200,M=30$} &\multicolumn{2}{c}{$n=200,p=500,M=50$}&	\multicolumn{2}{c}{$n=100,p=1000,M=50$}\\
			\hline
			&  $p_0=5$& $p_0=10$  &  $p_0=5$& $p_0=10$ &  $p_0=5$& $p_0=10$ \\
			\hline
			SIRS & 0.971 &0.366 & 1.000&0.936 & 0.902 &0.101 \\
			DC-SIS & 0.987 & 0.401 & 1.000& 0.944& 0.943 & 0.115\\
			L-SIS & 0.987 & 0.424 & 1.000  & 0.943  & 0.942 & 0.136\\
			L-Lasso & 0.997 & 0.978 & 1.000 & 1.000 & 0.994& 0.562  \\
			L-FOSS &1.000& 0.995 & 1.000 & 1.000  &0.999& 0.644\\
			\hline
		\end{tabular}\\[3mm]
	\end{table}

	\subsection{Borehole model}

	The following borehole model (Worley 1987)
	\begin{equation}\label{bh}
	y=\frac{2\pi T_u(H_u-H_l)}{\displaystyle \log(r/r_w)\Big[1+\frac{\displaystyle
			2LT_u}{\displaystyle \log(r/r_w)r_w^2K_w}+T_u/T_l\Big]}
	\end{equation} that describes the flow rate through a borehole is widely used in computer experiments (Morris, Mitchell, and Ylvisaker 1993; Xiong, Qian, and Wu 2013). The ranges of the eight input variables in this model are $r_\omega \in [0.05,0.15]\,{\mathrm{m}},\ r \in [100,50000]\,{\mathrm{m}},\ T_u \in [63070,115600]\,{\mathrm{m^2/yr}},\\ H_u \in [990,1110]\,{\mathrm{m}},\ T_l \in [63.1,116]\,{\mathrm{m^2/yr}},\ H_l \in [700,820]\,{\mathrm{m}},\ L\in [1120,1680]\,{\mathrm{m}},$ and $K_\omega \in [1500,15000]\,{\mathrm{m/yr}}$. We augment the dimension of the borehole model to $p=100$ and 500 by adding noisy input variables, and consider two combinations of $(n,p,M)$; see Table 3. The design matrix is generated by simple random sampling in the simulation. The five methods in Section \ref{subsection:3f} are compared.
	
	The borehole model itself is sparse. The Sobol' indices (Sobol' and Saltelli 1995) of the eight input variables in \eqref{bh} are $0.5713,\ <5\times10^{-5},\ <5\times10^{-5},\ 0.0356,\ <5\times10^{-5},\ 0.0357,\ 0.0342,\ 0.4649$, respectively.
	First, we only consider the first and eighth variables as active variables, and compute the coverage rates that the selected subset includes the two variables of the five methods over 1000 repetitions. For the two cases of $(n,p,M)=(50,100,30)$ and $(n,p,M)=(200,500,30)$, all the methods can correctly screen the two variables over all the repetitions. Second, we add the fourth, sixth, and seventh variables in the set of active variables, and compute the coverage rates that the selected subset includes the five variables. The simulation results are shown in Table \ref{tab:bh}. We can see that, it is difficult for SIRS, DC-SIS, and L-SIS to screen the active variables, and L-FOSS performs much better than them.

	\begin{table}[t]
		\centering\footnotesize
		\footnotesize
		\caption{Coverage rates of the borehole model with five active variables}\label{tab:bh}
		\begin{tabular}{lcc}
			\hline \hline
			&\multicolumn{1}{c}{$n=50,p=100,M=30$} &\multicolumn{1}{c}{$n=200,p=500,M=30$}	\\
			\hline
			SIRS & 0.088 &0.096\\
			DC-SIS & 0.094 & 0.120\\
			L-SIS & 0.111 & 0.198 \\
			L-Lasso & 0.108& 0.838\\
			L-FOSS & 0.282 & 0.980\\
			\hline
		\end{tabular}\\[2mm]
	\end{table}
	
In addition, we conduct a small simulation to evaluate our L-FOSS method with a data-driven $M$. Let $M_0$ denote the number of active variables from L-Lasso. Note that Fan and Lv (2008) suggested $M=n/\log(n)$. We select $M\in[\min\{M_0,n/\log(n)\},\max\{M_0,n/\log(n)\}]$ by minimizing the generalized cross-validation (GCV) criterion (Golub, Heath, and Wahba 1979),
$$\mathrm{GCV}(M)=\frac{\min_{\s{A}\subset Z_n,|\s{A}|=M}\left\|\v{y}-\left(\widehat{\beta}_0(\s{A})\v{1}_n+\m{X}_{\s{A}}\widehat{\vg{\beta}}(\s{A})\right)\right\|^2}{n(1-M/n)^2},$$where $\v{1}_n$ denotes the $n$-vector $(1,\ldots,1)'$ and $\left(\widehat{\beta}_0(\s{A}),\widehat{\vg{\beta}}(\s{A})'\right)'$ are the least squares estimators in \eqref{1}. We consider the case of $(n,p)=(200,500)$ and implement L-FOSS with such $M$. The mean and standard deviation of the selected $M$ over 1000 repetitions are 36.005 and 1.398, respectively. The coverage rate over the 1000 repetitions is 0.982. We also compute the coverage rates of correctly selecting individual active variables (the first, fourth, sixth, seventh, eighth variables), which are $1.000,\ 0.988,\ 0.996,\ 0.993$, and $1.000$, respectively. The high rates of correctly selecting the first and eighth variables reflect the fact mentioned before that they have the largest values of the sensitivity index.

	\subsection{Quadratic basis}
	\label{subsec:t}
	
	It can be seen from the previous subsections that the linear screening method such as F-FOSS with the linear basis function $b(x)=x$ in \eqref{glm} performs quite well for various cases. For some extreme cases where the linear basis does not work, Section \ref{sec:db} points out that linear screening can still be valid with different basis functions. In this subsection we conduct a small simulation to verify this point.
	
	Here we consider the one-dimensional function $f(x)=10(x-{1}/{2})^2$ in Section \ref{sec:db}, which leads to invalidness of linear basis, and augment the dimension by adding noisy input variables. The design matrix is generated by simple random sampling in the simulation. We consider the two linear screening methods, L-Lasso and L-FOSS, and use the linear basis function and the quadratic basis function \eqref{b} in them. The two-stage method proposed in Section \ref{sec:db} is also compared. It compares the residual sums of squares of the selected subset from the two basis functions, and use the subset corresponding to the smaller as the final result.
The coverage rates of these methods are shown in Table \ref{tab:db}. We can see that the linear basis does indeed yield bad screening results, and that the quadratic basis function and the two-stage method improve it obviously.
	
	\begin{table}[t]
		\centering\footnotesize
		\footnotesize
		\caption{Coverage rates in Section \ref{subsec:t}}\label{tab:db}
		\begin{tabular} {lcccc}
			\hline \hline
			\multicolumn{5}{c}{$n=50,p=100,M=5,p_0=1$} \\
			\hline
			&\quad&L-Lasso&L-FOSS\\
			\hline
			linear basis&\quad&0.008 & 0.110\\
			quadratic basis&\quad& 1.000&1.000\\
			two-stage&\quad& 1.000&1.000\\\hline
		\end{tabular}\\
	\end{table}

	\section{Discussion}\label{Sec:summary}

	In this paper we have developed linear screening methods to screen active input variables for high-dimensional computer experiments. Numerical investigations show that the proposed methods are very effective. In particular, L-FOSS performs much better than existing model-free marginal screening methods. Our methods can be viewed as analogues of the linear model method in sensitivity analysis, and we have provided theoretical guarantees for them based on the theory of BLA.


\section*{Appendix}
	\setcounter{equation}{0}
	\renewcommand{\thesubsection}{\Alph{subsection}}
	\renewcommand{\theequation}{A.\arabic{equation}}
	\subsection{Hardy-Kruse variation and $L_\infty$ discrepancy}
	\begin{definition} (Owen 2005)\label{def:v}
		Let $g:[0,1]^p \to \mathbb{R}$. If $J$ is a sub-rectangle of $[0,1]^p,$ let $\Delta_J(g)$ be the sum of the values of $g$ at the $2^p$ vertices of $J$, with alternating signs at nearest neighbour vertices. The Vitali variation of $g:[0,1]^p \to \mathbb{R}$ is defined to be
		\begin{align*}
		V^{Vit}(g):=\sup\left\{
		\sum_{J\in \Pi}|\Delta_J(g)| \left |
		\begin{gathered}
		\Pi  \text{\ is a partition of }[0,1]^p \text{\ into finitely}\\  \text{many non-overlapping sub-rectangles.}
		\end{gathered}
		\right\} \right .
		\end{align*}
		For $1\leqslant s\leqslant p$, the Hardy-Krause variation of $g$ is defined to be
		\begin{align*}
		V_{HK}(g):=\sum_F V^{Vit}(g|G),
		\end{align*}
		where the sum runs over all faces $G$ of $[0,1]^p$ having dimension at most $s$.
	\end{definition}
	\begin{definition}\label{def:d}	(Heinrich et al. 2001)
		Let $\v{x}_i=(x_{i1},\cdots,x_{ip})'\in[0,1]^p,\ i=1,\cdots,n$ and $F_n(\v{x})$ be the empirical distribution of the points for $\v{x}=(x_1,\ldots,x_p)'$.
		The $L_\infty$ discrepancy $\nu_{p,n}$ of $\{\v{x}_1,\ldots,\v{x}_n\}$ is defined as
		\begin{align}\label{nu}
		\nu_{p,n}=\sup_{\v{x} \in [0,1]^p}\left|F_n(\v{x})-\prod_{i=1}^p x_i\right|
		\end{align}
	\end{definition}

	\subsection{Lemmas}
	\begin{lemma} (Chen and Li 2003)\label{lemma1}
		For two $n\times n$ real symmetric matrices $\m{A}$ and $\m{B}$, let the eigenvalues of $\m{A}$ and $\m{B}$ be $\lambda_1\geqslant\cdots\geqslant\lambda_n$ and $\mu_1\geqslant\cdots\geqslant\mu_n$, respectively. Then for any $i=1,\ldots,n$,
		\begin{align*}
		|\lambda_i-\mu_i|\leqslant\|\m{B}-\m{A}\|.
		\end{align*}
	\end{lemma}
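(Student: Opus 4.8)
The plan is to prove this classical Weyl-type perturbation bound through the variational (Courant--Fischer min--max) characterization of the eigenvalues of symmetric matrices. First I would set $\m{E}=\m{B}-\m{A}$, which is again real symmetric, and record the key elementary fact that for a real symmetric matrix the spectral norm coincides with the largest absolute value of the Rayleigh quotient, that is, $\|\m{E}\|=\max_{\|\v{x}\|=1}\left|\v{x}'\m{E}\v{x}\right|$ (the spectral radius of a symmetric matrix equals its spectral norm). This immediately yields the uniform two-sided bound $-\|\m{E}\|\leqslant\v{x}'\m{E}\v{x}\leqslant\|\m{E}\|$ for every unit vector $\v{x}$, and hence $\v{x}'\m{A}\v{x}-\|\m{E}\|\leqslant\v{x}'\m{B}\v{x}\leqslant\v{x}'\m{A}\v{x}+\|\m{E}\|$ for all such $\v{x}$.

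Next I would invoke the Courant--Fischer theorem, which expresses the $i$th largest eigenvalue of a symmetric matrix $\m{M}$ as $\max_{\dim\s{S}=i}\min_{\v{x}\in\s{S},\,\|\v{x}\|=1}\v{x}'\m{M}\v{x}$, the maximum being over all $i$-dimensional subspaces $\s{S}$ of $\mathbb{R}^n$. Applying this to $\m{B}$ and substituting the upper bound from the previous step, I get, for each fixed subspace $\s{S}$, the inequality $\min_{\v{x}\in\s{S},\,\|\v{x}\|=1}\v{x}'\m{B}\v{x}\leqslant\min_{\v{x}\in\s{S},\,\|\v{x}\|=1}\v{x}'\m{A}\v{x}+\|\m{E}\|$, since the additive constant $\|\m{E}\|$ passes through the minimization uniformly. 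Taking the maximum over all $i$-dimensional $\s{S}$ on both sides then gives $\mu_i\leqslant\lambda_i+\|\m{E}\|$. Repeating the argument with the roles of $\m{A}$ and $\m{B}$ interchanged (equivalently, using the lower bound $\v{x}'\m{B}\v{x}\geqslant\v{x}'\m{A}\v{x}-\|\m{E}\|$) yields $\mu_i\geqslant\lambda_i-\|\m{E}\|$, and combining the two inequalities produces $\left|\lambda_i-\mu_i\right|\leqslant\|\m{E}\|=\|\m{B}-\m{A}\|$, as required.

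The only genuinely delicate point is the step in which the pointwise perturbation bound is propagated through the nested $\max$--$\min$: one must check that an inequality $\v{x}'\m{B}\v{x}\leqslant\v{x}'\m{A}\v{x}+\|\m{E}\|$ that holds for every unit $\v{x}$ survives both the inner minimization and the outer maximization over subspaces. This is routine monotonicity of $\min$ and $\max$ under a uniform additive perturbation, but it is precisely where the symmetry of $\m{E}$ is essential, since symmetry is what guarantees that the bound on $\v{x}'\m{E}\v{x}$ is a genuine constant $\|\m{E}\|$ independent of $\v{x}$ rather than a direction-dependent quantity. Everything else is standard, and no tool beyond the Courant--Fischer characterization and the spectral-norm identity for symmetric matrices is needed.
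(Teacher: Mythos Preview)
Your proof is correct and is exactly the standard Weyl--Courant--Fischer argument for this eigenvalue perturbation bound. Note, however, that the paper does not supply its own proof of this lemma at all: it simply cites the result from Chen and Li (2003) and uses it as a black box, so there is no in-paper argument to compare against.
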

	\begin{lemma}\label{lemma2}
		{\bf(Koksma-Hlawka inequality)} (Tezuka 2002) If $g\in$ BVHK on $[0,1]^p$, then for any $\v{x}_1,\cdots,\v{x}_n \in[0,1)^p$,
		\begin{align*}
		\left|\int_{[0,1]^p}g(\v{x})d\v{x}-\dfrac{1}{n}\sum\limits_{n=1}^{n}g(\v{x}_n)\right| \leqslant \nu_{k,n} V_{HK}(g),
		\end{align*}
		where $V_{HK}(g)$ and $\nu_{k,n}$ are defined in Definitions \ref{def:v} and \ref{def:d}, respectively.
	\end{lemma}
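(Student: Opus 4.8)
The plan is to prove the Koksma--Hlawka inequality through Zaremba's integration-by-parts identity, which expresses the quadrature error as an integral of the local discrepancy against the mixed variation of $g$. Introduce the local discrepancy function
\[
D_n(\v{x})=F_n(\v{x})-\prod_{j=1}^p x_j,\qquad \v{x}\in[0,1]^p,
\]
so that $\nu_{p,n}=\sup_{\v{x}\in[0,1]^p}|D_n(\v{x})|$ by Definition \ref{def:d}. For a subset $u\subseteq Z_p$, write $(\v{t}_u,\v{1})$ for the point of $[0,1]^p$ whose $j$th coordinate equals $t_j$ when $j\in u$ and equals $1$ when $j\notin u$, and let $g_u(\v{t}_u)=g(\v{t}_u,\v{1})$ denote the restriction of $g$ to the upper face indexed by $u$. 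The error functional $L(g)=\frac1n\sum_{i=1}^n g(\v{x}_i)-\int_{[0,1]^p}g\,d\v{x}$ is linear and vanishes on constants; the goal is to establish
\[
L(g)=\sum_{\emptyset\neq u\subseteq Z_p}(-1)^{|u|}\int_{[0,1]^{|u|}}D_n(\v{t}_u,\v{1})\,dg_u(\v{t}_u),
\]
where the inner integrals are Riemann--Stieltjes integrals, and then to bound each term.

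First I would treat the smooth case $g\in C^p[0,1]^p$, where the identity follows from the fundamental theorem of calculus. Iterating the one-dimensional formula $g(x)=g(1)-\int_x^1 g'(t)\,dt$ coordinate by coordinate yields the anchored representation
\[
g(\v{x})=\sum_{u\subseteq Z_p}(-1)^{|u|}\int_{[0,1]^{|u|}}\mathbf{1}\{x_j\leqslant t_j,\ j\in u\}\,\partial_u g(\v{t}_u,\v{1})\,d\v{t}_u,
\]
with $\partial_u g=\partial^{|u|}g/\prod_{j\in u}\partial x_j$ and the convention that the $u=\emptyset$ term equals $g(\v{1})$. Applying $L$ and interchanging it with the integral by Fubini, the constant $u=\emptyset$ term is annihilated, while for nonempty $u$ the functional acts on the indicator $\v{x}\mapsto\mathbf{1}\{x_j\leqslant t_j,\ j\in u\}$ and produces exactly $F_n(\v{t}_u,\v{1})-\prod_{j\in u}t_j=D_n(\v{t}_u,\v{1})$, since the coordinates set to $1$ impose no constraint on the design points in $[0,1)^p$. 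This is Zaremba's identity with $dg_u$ replaced by $\partial_u g\,d\v{t}_u$.

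From the identity the inequality is immediate in the smooth case: bounding $|D_n|$ by its supremum $\nu_{p,n}$ and recalling that for each face the Vitali variation equals $V^{Vit}(g_u)=\int_{[0,1]^{|u|}}|\partial_u g(\v{t}_u,\v{1})|\,d\v{t}_u$ gives
\[
|L(g)|\leqslant\nu_{p,n}\sum_{\emptyset\neq u\subseteq Z_p}V^{Vit}(g_u)=\nu_{p,n}\,V_{HK}(g),
\]
where the last equality is precisely the definition of the Hardy--Krause variation as the sum of the Vitali variations over all nonempty upper faces (Definition \ref{def:v}).

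The main obstacle is the passage from smooth $g$ to an arbitrary $g\in\mathrm{BVHK}$, for which $\partial_u g$ need not exist. Here I would replace the derivation above by its genuinely Stieltjes version: establish the multidimensional integration-by-parts identity directly, interpreting each $\int D_n\,dg_u$ as a Riemann--Stieltjes integral over the face $u$, and invoking the multidimensional analogue of $\bigl|\int h\,dg\bigr|\leqslant(\sup|h|)\,V(g)$ to get $\bigl|\int_{[0,1]^{|u|}}D_n(\v{t}_u,\v{1})\,dg_u\bigr|\leqslant\nu_{p,n}\,V^{Vit}(g_u)$. The technical content is to justify the Stieltjes integration by parts for functions of bounded Vitali variation on each face and to verify that the boundary (face) terms produced by successive integrations regroup exactly into the sum over $u$ with signs $(-1)^{|u|}$; the BVHK hypothesis guarantees finiteness of every $V^{Vit}(g_u)$, so all the Stieltjes integrals are well defined and the summation $\sum_{\emptyset\neq u}V^{Vit}(g_u)=V_{HK}(g)$ closes the argument. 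Alternatively, one may approximate $g$ uniformly by smooth functions whose Vitali variations on each face converge to those of $g$, apply the smooth case, and pass to the limit, though controlling the variations under smoothing is the delicate point of that route.
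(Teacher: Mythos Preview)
The paper does not prove this lemma at all: it is quoted as a classical result with a citation to Tezuka (2002), so there is no ``paper's own proof'' to compare against. Your outline via Zaremba's integration-by-parts identity is the standard route to the Koksma--Hlawka inequality and is correct in its essentials; the anchored decomposition, the identification of $L$ applied to the indicator with the local discrepancy $D_n(\v{t}_u,\v{1})$, and the final bound $\sum_{\emptyset\neq u}V^{Vit}(g_u)=V_{HK}(g)$ are all as they should be. The only genuinely unfinished step is, as you yourself note, the passage from the smooth case to general $g\in\mathrm{BVHK}$: of the two options you sketch, the direct Riemann--Stieltjes version is cleaner, since mollification need not preserve Vitali variation on faces without additional care, whereas the Stieltjes estimate $\bigl|\int h\,dg_u\bigr|\leqslant(\sup|h|)\,V^{Vit}(g_u)$ follows immediately from the definition of the integral once existence is secured. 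For the purposes of this paper the lemma is simply cited, so your level of detail already exceeds what is required.
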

	
	\begin{lemma} \label{lemma3}	(Owen 2005)
		For a set $u \subset Z_p$, let $u^c=Z_p\backslash u$ denote its complement. Let $\partial^u g$ denote the partial derivative of $g$ taken once with respect to each variable $j\in u$. For $\v{x} \in [0,1]^p$ and $u \subset Z_p$ let $\v{x}_u:\v{1}_{u^c}$ be the point $\v{y} \in [0,1]^p$ with $y_j=x_j$ for $j\in u$ and $y_j=1$ for $j \in u^c$. If the mixed partial derivative $\partial^{1:p}g$ exists, then
		\begin{align*}
		V_{HK}(g)\leqslant \sum_{u\neq \emptyset}\int_{[0,1]^{|u|}}\left|\partial^ug(\v{x}_u:\vg{1}_{u^c})\right|d\v{x}_u.
		\end{align*}
	\end{lemma}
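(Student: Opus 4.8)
The plan is to reduce the bound to two independent ingredients: the definitional decomposition of $V_{HK}(g)$ into a sum of Vitali variations over the faces of $[0,1]^p$ anchored at $\vg{1}$, and a face-by-face estimate of each Vitali variation by an integral of a mixed partial derivative. First I would make the correspondence between non-empty subsets and faces explicit. For non-empty $u\subset Z_p$, let $G_u=\{\v{x}\in[0,1]^p:\ x_j=1\text{ for }j\in u^c\}$; this is a face of dimension $|u|$, and the restriction $g|_{G_u}$ is exactly the function $\v{x}_u\mapsto g(\v{x}_u:\vg{1}_{u^c})$ on $[0,1]^{|u|}$. Taking $s=p$ in Definition \ref{def:v}, every non-empty $u$ contributes one such face (the zero-dimensional face $\{\vg{1}\}$ carries no Vitali variation), so that
\begin{align*}
V_{HK}(g)=\sum_{u\neq\emptyset}V^{Vit}\bigl(g(\v{x}_u:\vg{1}_{u^c})\bigr).
\end{align*}

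The core of the argument is a one-face inequality: for any $h\in C[0,1]^m$ whose full mixed partial $\partial^{1:m}h$ exists, I claim $V^{Vit}(h)\leqslant\int_{[0,1]^m}|\partial^{1:m}h(\v{x})|\,d\v{x}$. To prove it I would fix a sub-rectangle $J=\prod_{k=1}^m[a_k,b_k]$ and show that the signed vertex sum satisfies $\Delta_J(h)=\int_J\partial^{1:m}h\,d\v{x}$. This is obtained by applying the fundamental theorem of calculus one coordinate at a time: integrating $\partial^{1:m}h$ over $[a_m,b_m]$ in the last variable collapses it to the difference of $\partial^{1:(m-1)}h$ evaluated at $x_m=b_m$ and $x_m=a_m$, and iterating through all coordinates reproduces precisely the alternating sum over the $2^m$ vertices of $J$ with the nearest-neighbour sign pattern defining $\Delta_J$. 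Hence $|\Delta_J(h)|\leqslant\int_J|\partial^{1:m}h|$, and for any finite partition $\Pi$ of $[0,1]^m$ into non-overlapping sub-rectangles, $\sum_{J\in\Pi}|\Delta_J(h)|\leqslant\sum_{J\in\Pi}\int_J|\partial^{1:m}h|=\int_{[0,1]^m}|\partial^{1:m}h|$. Taking the supremum over $\Pi$ yields the claimed inequality.

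To finish I would apply the one-face inequality with $h=g(\v{x}_u:\vg{1}_{u^c})$ and $m=|u|$, noting that differentiating once in each free coordinate $j\in u$ gives $\partial^{1:|u|}h(\v{x}_u)=\partial^u g(\v{x}_u:\vg{1}_{u^c})$. Substituting into the face decomposition gives
\begin{align*}
V_{HK}(g)=\sum_{u\neq\emptyset}V^{Vit}\bigl(g(\v{x}_u:\vg{1}_{u^c})\bigr)\leqslant\sum_{u\neq\emptyset}\int_{[0,1]^{|u|}}\bigl|\partial^u g(\v{x}_u:\vg{1}_{u^c})\bigr|\,d\v{x}_u,
\end{align*}
which is the assertion.

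The step I expect to be the main obstacle is the vertex-sum identity $\Delta_J(h)=\int_J\partial^{1:m}h$, since it carries all the combinatorial bookkeeping: one must verify that the alternating signs generated by successively applying the fundamental theorem of calculus in each coordinate match the nearest-neighbour sign convention used in the definition of $\Delta_J$, and that mere existence (rather than continuity) of $\partial^{1:m}h$ suffices to justify the iterated integration. Matching the paper's ``faces of dimension at most $s$'' formulation to the anchored-at-$\vg{1}$ subset sum is a secondary point, requiring one to fix the anchor convention so that only the upper faces $G_u$ enter the sum.
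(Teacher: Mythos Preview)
The paper does not supply its own proof of this lemma: it is quoted verbatim from Owen (2005) and used as a black box. So there is nothing in the paper to compare your argument against.

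Your sketch is the standard route to this inequality and is essentially correct. Two small points are worth flagging. First, Definition~\ref{def:v} as written in the paper is slightly ambiguous about which faces enter the sum; the Hardy--Krause variation used here (and in Owen) is the one anchored at $\vg{1}$, i.e., the sum runs only over the upper faces $G_u=\{\v{x}:x_j=1\text{ for }j\in u^c\}$, one for each non-empty $u\subset Z_p$. You have already identified this and handled it correctly. Second, the identity $\Delta_J(h)=\int_J\partial^{1:m}h$ strictly needs $\partial^{1:m}h$ to be integrable (e.g., continuous) on each sub-rectangle so that the iterated application of the fundamental theorem of calculus is justified; mere pointwise existence is not quite enough. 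Owen's statement and the paper's use of it implicitly assume this much regularity, so the gap is cosmetic rather than substantive.
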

	
	\begin{lemma} (Owen 2005)\label{lemma:owen2}
		Let $f$ and $g$ be functions on $[0,1]^p$. If $f,g \in$ BVHK ,  then $f+g,f-g,fg\in$ BVHK.
	\end{lemma}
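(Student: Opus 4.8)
The plan is to treat the three operations in turn, disposing of $f+g$ and $f-g$ by linearity and reserving the real work for the product $fg$.

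For the sum and difference, I would exploit that the box functional $\Delta_J$ of Definition \ref{def:v} is linear in its argument, so $\Delta_J(f\pm g)=\Delta_J(f)\pm\Delta_J(g)$ and $\Delta_J(-g)=-\Delta_J(g)$. Hence, for any partition $\Pi$ and any face $G$, the triangle inequality gives $\sum_{J\in\Pi}|\Delta_J((f\pm g)|_G)|\leqslant\sum_{J\in\Pi}|\Delta_J(f|_G)|+\sum_{J\in\Pi}|\Delta_J(g|_G)|$; taking suprema yields $V^{Vit}((f\pm g)|_G)\leqslant V^{Vit}(f|_G)+V^{Vit}(g|_G)$, and summing over all faces $G$ of dimension at most $p$ gives $V_{HK}(f\pm g)\leqslant V_{HK}(f)+V_{HK}(g)<\infty$, so $f\pm g\in\mathrm{BVHK}$.

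For the product I first record a boundedness fact that the product argument needs: a BVHK function is bounded. For any $\v{x}$, I would expand $f(\v{x})$ by inclusion--exclusion on the box $\prod_{j}[x_j,1]$, solving the mixed difference $\Delta_{[\v{x},\v{1}]}(f)$ for the corner value $f(\v{x})$; the remaining corner terms have some coordinate equal to $1$ and live on lower-dimensional faces, and are controlled recursively, giving $\|f\|_\infty\leqslant|f(\v{1})|+V_{HK}(f)$. Now, since $\Delta_J=\prod_{i=1}^p\Delta_i$ is the tensor product of the one-coordinate differences and each obeys the discrete Leibniz rule $\Delta_i(fg)=f_i^{+}\,\Delta_i g+(\Delta_i f)\,g_i^{-}$ (upper/lower endpoint evaluation in coordinate $i$), iterating over all coordinates expands $\Delta_J(fg)$ into $2^{p}$ terms indexed by the subset $u\subseteq Z_p$ of coordinates in which $f$ is differenced: in the term for $u$, $f$ is differenced over $u$ with the coordinates of $u^c$ anchored at upper corners, while $g$ is differenced over $u^c$ with the coordinates of $u$ anchored at lower corners. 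For the two extreme terms $u=Z_p$ and $u=\emptyset$ one factor is a single point value, so summing over a partition and using boundedness gives the clean bounds $\|g\|_\infty\,V^{Vit}(f)$ and $\|f\|_\infty\,V^{Vit}(g)$.

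The remaining cross terms, with $\emptyset\neq u\neq Z_p$, are where the difficulty lies and form the main obstacle: both factors are then genuine mixed differences (in different coordinates for $f$ and for $g$), so naive factoring by a sup-norm fails and Cauchy--Schwarz is useless, since $\sum_{J}|\Delta_J|^2$ is not controlled by any variation as the mesh refines. The device I would use is to decompose each such mixed difference of a BVHK function, by telescoping against an anchor corner, into a Vitali difference over a sub-box anchored at that corner plus strictly lower-dimensional differences supported on faces, which are exactly the pieces assembled in the Hardy--Krause sum of Definition \ref{def:v}. After this decomposition, each cross term sums, uniformly over partitions, into a product of a Vitali or face variation of $f$ with one of $g$, producing an overall estimate of the form $V_{HK}(fg)\leqslant\|f\|_\infty V_{HK}(g)+\|g\|_\infty V_{HK}(f)+C_p\,V_{HK}(f)V_{HK}(g)$ for a dimensional constant $C_p$; finiteness of the right-hand side gives $fg\in\mathrm{BVHK}$. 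The hard part is precisely carrying this anchor-telescoping bookkeeping through all $2^p$ subsets and all faces while keeping every bound uniform over partitions, so that the products of two mixed differences are tamed by the face-wise structure of $V_{HK}$ rather than lost.
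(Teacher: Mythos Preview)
The paper does not prove this lemma at all: it is stated with attribution to Owen (2005) and used as a black box, so there is no ``paper's own proof'' to compare against. Your proposal therefore goes well beyond what the paper does.

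On the substance of your outline: the linear part ($f\pm g$) is correct and standard. The product argument via the iterated discrete Leibniz rule is also the right skeleton, and you correctly isolate the genuine difficulty in the cross terms $\emptyset\neq u\neq Z_p$. One caution: your anchor-telescoping step, as described, needs the partitions to be \emph{grids} (products of one-dimensional partitions) so that a box $J$ factors as $J_u\times J_{u^c}$ and the double sum over $J_u,J_{u^c}$ can be disentangled; for general non-grid rectangle partitions the factorisation breaks down. This is harmless because the supremum defining $V^{Vit}$ is attained (or approached) on grid refinements, but you should say so explicitly. With that in place, the recursive telescoping of $\Delta_u f(\cdot,y_{u^c})$ against the anchor $y_{u^c}=\v{1}_{u^c}$ does reduce each cross term to sums of products of face Vitali variations, yielding a bound of the shape $V_{HK}(fg)\leqslant \|f\|_\infty V_{HK}(g)+\|g\|_\infty V_{HK}(f)+C_p\,V_{HK}(f)V_{HK}(g)$, which is essentially the Bl\"umlinger--Tichy/Owen estimate. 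So the plan is sound; the acknowledged ``hard part'' is bookkeeping rather than a missing idea.
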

	
	\begin{lemma}\label{lemma4}
		(Kiefer and Wolfowitz 1958)
		Let $F$ be a distribution function on Euclicean $m$-space and $\v{x}_1,\cdots,\v{x}_n$ be independent chance variables with distribution function $F$. $F_n$ is empirical distribution function of the points. For each $m$, there exists positive constants $c_0$ and $c$, such that for all $n$, all $F$ and all positive $r$,
		$$P\left(\sup\limits_{\v{x}} |F(\v{x})-F_n(\v{x})|\leqslant r\right)\geqslant 1-c_0\exp(-cnr^2).$$
	\end{lemma}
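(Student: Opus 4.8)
The plan is to recast the statement as a sub-Gaussian tail bound for the supremum of a multivariate empirical process and then assemble it from two ingredients: concentration of that supremum about its mean, and an $O(n^{-1/2})$ bound on the mean itself. Write $Z=\sup_{\v{x}}|F_n(\v{x})-F(\v{x})|$. Since $F(\v{x})-F_n(\v{x})$ equals $P(C_{\v{x}})-P_n(C_{\v{x}})$ for the lower orthant $C_{\v{x}}=\prod_{j=1}^m(-\infty,x_j]$, where $P_n$ is the empirical measure of $\v{x}_1,\ldots,\v{x}_n$, we may view $Z=\sup_{C\in\s{C}}|P(C)-P_n(C)|$ over the class $\s{C}$ of all such orthants. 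It then suffices to prove $P(Z>r)\le c_0\exp(-cnr^2)$ for all $r>0$, since the asserted inequality is exactly its complement.

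First I would establish concentration. Viewing $Z=Z(\v{x}_1,\ldots,\v{x}_n)$ as a function of the i.i.d.\ sample, replacing a single coordinate $\v{x}_i$ alters the empirical measure $P_n$ by at most $1/n$ in total variation and hence changes $Z$ by at most $1/n$. McDiarmid's bounded-differences inequality (with $\sum_i(1/n)^2=1/n$) then yields $P(Z\ge \mathbb{E}Z+t)\le\exp(-2nt^2)$ for every $t>0$, a clean sub-Gaussian deviation with no spurious prefactor and with constant independent of $F$.

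Next I would bound $\mathbb{E}Z$. The key structural fact is that the orthant class $\s{C}$ is a Vapnik--Chervonenkis class whose dimension depends only on the ambient dimension $m$, so its uniform covering numbers grow only polynomially. By symmetrization, $\mathbb{E}Z\le 2\,\mathbb{E}\sup_{C\in\s{C}}\left|n^{-1}\sum_{i=1}^n\varepsilon_i\mathbbm{1}_C(\v{x}_i)\right|$ with i.i.d.\ Rademacher signs $\varepsilon_i$, and Dudley's entropy integral applied to this VC class gives $\mathbb{E}Z\le K_m\,n^{-1/2}$ with $K_m$ depending on $m$ alone. Both this bound and the concentration step are uniform over $F$, as the conclusion requires.

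Finally I would combine the two estimates and split on the size of $r$. For $r\ge 2K_m\,n^{-1/2}$, taking $t=r-\mathbb{E}Z\ge r/2$ gives $P(Z>r)\le\exp(-2n(r/2)^2)=\exp(-nr^2/2)$. For $r<2K_m\,n^{-1/2}$ one has $nr^2<4K_m^2$, so $\exp(-cnr^2)$ stays bounded below by a positive constant, and choosing $c=1/2$ together with $c_0=\exp(2K_m^2)$ makes $c_0\exp(-cnr^2)\ge1$, so the bound holds trivially. The same pair $(c_0,c)$ works in both regimes, proving $P(Z>r)\le c_0\exp(-cnr^2)$ and hence $P(\sup_{\v{x}}|F-F_n|\le r)\ge 1-c_0\exp(-cnr^2)$. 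I expect the genuine difficulty to be precisely the removal of the polynomial shattering factor: a direct Vapnik--Chervonenkis tail inequality would leave a factor of order $(n+1)^{V}$ that cannot be absorbed into $\exp(-cnr^2)$ uniformly in $r$ and $n$ (the obstruction lives in the small-$r$ range). Routing the argument through the expected supremum and the sharp bounded-differences concentration, rather than through a raw VC union bound, is what eliminates this factor and delivers the stated clean exponential form.
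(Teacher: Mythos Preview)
The paper does not prove this lemma at all: it is quoted verbatim from Kiefer and Wolfowitz (1958) as a cited auxiliary result, with no argument supplied. So there is no ``paper's own proof'' to compare against.

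Your argument is a correct and self-contained modern proof. The two ingredients---McDiarmid's bounded-differences inequality for concentration of $Z$ around $\mathbb{E}Z$, and the VC/symmetrization bound $\mathbb{E}Z\le K_m n^{-1/2}$ uniform in $F$---are both standard and correctly applied, and the case split on $r$ versus $2K_m n^{-1/2}$ cleanly absorbs the small-$r$ range into the constant $c_0$. This route is quite different from the original 1958 combinatorial argument (which predates both McDiarmid and VC theory), and it has the advantage of being short and transparent about where the dimension-dependent constants enter. Your closing remark is also on point: a naive VC union bound would leave an $(n{+}1)^{V}$ prefactor that is fatal for small $r$, and going through $\mathbb{E}Z$ is precisely what removes it.
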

	\begin{lemma}\label{lemma6}
		Under Assumption 1, for $\s{A} \subset Z_p$, we have
		\begin{align}
		\beta_{Z_p}(\s{A})_j&=12\left(\int_{[0,1]^p}x_jf_n(\v{x})d\v{x}-\frac{1}{2}\int_{[0,1]^p}f_n(\v{x})d\v{x}\right),\ j \in \s{A},\label{betaAj}\\
		\beta_0(\s{A})&=\int_{[0,1]^p}f_n(\v{x})d\v{x}-\sum_{j \in \s{A}}\beta_{Z_p}(\s{A})_j/2.\label{betaA0}
		\end{align}
		Furthermore,
		\begin{align}
		|\beta_{Z_p}(\s{A})_j|&>12\tau, \quad j \in \s{A}\cap \s{A}_0,\label{eb1}\\
		|\beta_{Z_p}(\s{A})_j|&<12\eta_n, \quad j \in \s{A} \setminus{A}_0.\label{eb2}
		\end{align}
		\begin{proof}
			Let $\s{A}=\{a_1,\cdots,a_m\}$. By (7), let
			\begin{align*}
			\frac{\partial G}{\partial \phi_0}&=\int_{[0,1]^p}(f_n(\v{x})-(\phi_0+\phi_{a_1}x_{a_1}+\cdots+\phi_{a_m}x_{a_m}))d\v{x}=0, \\
			\frac{\partial G}{\partial \phi_{a_j}}&=\int_{[0,1]^p}(f_n(\v{x})-(\phi_0+\phi_{a_1}x_{a_1}+\cdots+\phi_{a_m}x_{a_m}))x_{a_j}d\v{x}=0,\ j=1,\cdots,m,
			\end{align*}
			which lead to
			\begin{align}\label{betaZw}
			\begin{pmatrix}
			\beta_0({\s{A}})\\\vg{\beta}({\s{A}})
			\end{pmatrix}&=
			\begin{pmatrix}
			1&&\int_{[0,1]^p}x_{a_1}d\v{x}&&\cdots&&\int_{[0,1]^p}x_{a_m}d\v{x}\\
			\int_{[0,1]^p}x_{a_1}d\v{x}&&\int_{[0,1]^p}x_{a_1}^2d\v{x}&& \cdots&&\int_{[0,1]^p}x_{a_1}x_{a_m}d\v{x}\\
			\vdots&&\vdots&&\ddots&&\vdots\\
			\int_{[0,1]^p}x_{a_m}d\v{x}&&\int_{[0,1]^p}x_1x_{a_m}d\v{x}&&\cdots&&\int_{[0,1]^p}x_{a_m}^2d\v{x}
			\end{pmatrix}^{-1}
			\begin{pmatrix}
			\int_{[0,1]^p}f_n(\v{x})d\v{x}\\
			\int_{[0,1]^p}x_{a_1}f_n(\v{x})d\v{x}\\
			\vdots\\
			\int_{[0,1]^p}x_{a_m}f_n(\v{x})d\v{x}
			\end{pmatrix}\notag \\&=\m{U}^{-1}\v{v},
			\end{align}
			where\begin{align} \label{Zw}
			\m{U}=\begin{pmatrix}
			1&& \frac{1}{2}&&  \ldots&& \frac{1}{2}\\
			\frac{1}{2}&&\frac{1}{3}&& \ldots&& \frac{1}{4}\\
			\vdots&&\vdots&&\ddots&&\vdots\\
			\frac{1}{2}&&\frac{1}{4}&&  \ldots&& \frac{1}{3}
			\end{pmatrix}_{(m+1)\times (m+1)},
			\v{v}=\begin{pmatrix}
			\int_{[0,1]^p}f_n(\v{x})d\v{x}\\
			\int_{[0,1]^p}x_{a_1}f_n(\v{x})d\v{x}\\
			\vdots\\
			\int_{[0,1]^p}x_{a_m}f_n(\v{x})d\v{x}
			\end{pmatrix}.
			\end{align}
			Some algebra yields
			\begin{align} \label{Zn}
			\m{U}^{-1}=12\begin{pmatrix}
			\frac{1+3m}{12}&-\frac{1}{2}&\cdots&-\frac{1}{2}\\
			-\frac{1}{2}&1&\cdots&0\\
			\vdots&\vdots&\ddots&\vdots\\
			-\frac{1}{2}&0&\cdots&1
			\end{pmatrix}.
			\end{align}
			By \eqref{betaZw} and \eqref{Zn}, we get
			(\ref{betaAj}) and (\ref{betaA0}).
			
			By (6), we get
			\begin{align*}
			|\beta_{Z_p}(\s{A})_j|=12\left |\int_{[0,1]^p}x_jf_n(\v{x})d\v{x}-\frac{1}{2}\int_{[0,1]^p}f_n(\v{x})d\v{x}\right|>12\tau, \quad j \in \s{A} \cap \s{A}_0.
			\end{align*}Furthermore, let \begin{equation}r_n(\v{x})=f_n(\v{x})-\widetilde{f}_n(\v{x}_{\s{A}_0}).\label{r}\end{equation} By (5), for $j \in
			\s{A}\setminus \s{A}_0$,
			\begin{align*}
			|\beta_{Z_p}(\s{A})_j|&=12\left |\int_{[0,1]^p}x_jf_n(\v{x})d\v{x}-\frac{1}{2}\int_{[0,1]^p}f_n(\v{x})d\v{x}\right|\\
			&=12\left |\int_{[0,1]^p}x_jr_n(\v{x})d\v{x}-\frac{1}{2}\int_{[0,1]^p}r_n(\v{x})d\v{x}\right| \\
			&\leqslant 12\left(\left |\int_{[0,1]^p}x_jr_n(\v{x})d\v{x}\right|+\left|\frac{1}{2}\int_{[0,1]^p}r_n(\v{x})d\v{x}\right|\right)\\
			&< 12 \left(\eta_n \int_{[0,1]^p}x_jd\v{x}+\frac{1}{2}\eta_n\right)=12\eta_n.
			\end{align*}
			This completes the proof.
		\end{proof}
	\end{lemma}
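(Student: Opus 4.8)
The plan is to treat the constrained best linear approximation \eqref{betaA} as an ordinary least squares problem in the $m+1$ coefficients $(\phi_0,\phi_{a_1},\ldots,\phi_{a_m})$, where $\s{A}=\{a_1,\ldots,a_m\}$, and to solve the resulting normal equations in closed form. Differentiating the quadratic objective $\int_{[0,1]^p}[f_n(\v{x})-(\phi_0+\sum_k\phi_{a_k}x_{a_k})]^2\,d\v{x}$ in each coordinate and setting the derivatives to zero gives a linear system $\m{U}(\beta_0(\s{A}),\vg{\beta}(\s{A})')'=\v{v}$, in which $\v{v}$ collects the moments $\int f_n\,d\v{x}$ and $\int x_{a_k}f_n\,d\v{x}$, and $\m{U}$ is the $(m+1)\times(m+1)$ Gram matrix of $1,x_{a_1},\ldots,x_{a_m}$ under the uniform measure on $[0,1]^p$. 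Because $\int_{[0,1]^p}x_{a_k}\,d\v{x}=1/2$, $\int_{[0,1]^p}x_{a_k}^2\,d\v{x}=1/3$, and $\int_{[0,1]^p}x_{a_k}x_{a_l}\,d\v{x}=1/4$ for $k\neq l$, the matrix $\m{U}$ is completely explicit and does not depend on which indices actually belong to $\s{A}$.

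The key computational step is to invert $\m{U}$. I would exploit its block structure: the lower-right $m\times m$ block equals $\tfrac{1}{12}\m{I}_m+\tfrac{1}{4}\vg{1}_m\vg{1}_m'$, a scalar matrix plus a rank-one term, bordered by a top-left entry $1$ and a first row and column otherwise equal to $1/2$. Using that $\vg{1}_m$ is an eigenvector of the rank-one part, a Schur-complement computation (or direct verification of a candidate closed form for $\m{U}^{-1}$) yields an inverse whose off-diagonal structure collapses to a single constant $1+3m$ in the leading entry, $-6$ along the first row and column, and $12$ on the remaining diagonal. Multiplying $\m{U}^{-1}$ by $\v{v}$ and reading off the rows then produces \eqref{betaAj} and \eqref{betaA0} after elementary algebra; in particular each slope $\beta_{Z_p}(\s{A})_j$ depends only on the single moment $\int x_jf_n\,d\v{x}$ together with $\int f_n\,d\v{x}$, and not on the other indices of $\s{A}$, which is what makes the subsequent bounds uniform over $\s{A}$.

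For the lower bound \eqref{eb1}, when $j\in\s{A}\cap\s{A}_0$ the formula \eqref{betaAj} reduces $|\beta_{Z_p}(\s{A})_j|$ to $12$ times the quantity controlled in part \eqref{3} of Assumption \ref{assumption 2.1}, so $|\beta_{Z_p}(\s{A})_j|>12\tau$ follows immediately. The upper bound \eqref{eb2} is where the near-sparsity hypothesis \eqref{2} enters. Writing $r_n(\v{x})=f_n(\v{x})-\widetilde{f}_n(\v{x}_{\s{A}_0})$, I would use the decisive fact that for $j\notin\s{A}_0$ the surrogate $\widetilde{f}_n(\v{x}_{\s{A}_0})$ does not depend on $x_j$, so that by Fubini $\int x_j\widetilde{f}_n\,d\v{x}=\tfrac{1}{2}\int\widetilde{f}_n\,d\v{x}$ and the $\widetilde{f}_n$-contribution to $\int x_jf_n-\tfrac{1}{2}\int f_n$ cancels exactly. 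Only the remainder $r_n$ survives, and the triangle inequality together with $\sup|r_n|<\eta_n$ and $\int_{[0,1]^p}x_j\,d\v{x}=1/2$ bounds $|\beta_{Z_p}(\s{A})_j|$ by $12(\tfrac{1}{2}\eta_n+\tfrac{1}{2}\eta_n)=12\eta_n$.

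I expect the main obstacle to be purely bookkeeping rather than analysis: carrying out the inversion of $\m{U}$ cleanly and tracking the $m$-dependent leading entry $1+3m$ through the matrix product so that the expression for $\beta_0(\s{A})$ lines up exactly with \eqref{betaA0}. Beyond that, no genuine difficulty arises, since the cancellation underlying \eqref{eb2} is exact and the remaining estimates are one-line consequences of the two parts \eqref{2} and \eqref{3} of Assumption \ref{assumption 2.1}.
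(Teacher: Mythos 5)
Your proposal is correct and follows essentially the same route as the paper's proof: you set up the same normal equations with the Gram matrix $\m{U}$ of $1,x_{a_1},\ldots,x_{a_m}$, arrive at the identical closed-form inverse (leading entry $1+3m$, entries $-6$ along the first row and column, $12$ on the remaining diagonal), and then obtain \eqref{eb1} directly from \eqref{3} and \eqref{eb2} via the exact cancellation of the $\widetilde{f}_n$-contribution for $j\notin\s{A}_0$ followed by the triangle inequality on the remainder $r_n$. The only difference is cosmetic: you spell out the rank-one/Schur-complement structure behind the inversion where the paper simply says ``some algebra yields.''
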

	\begin{lemma}\label{lemma:a0}
		Under Assumption 1, for $\s{A}_1 \in \s{U}_1$, we have
		\begin{align*}
		&\int_{[0,1]^p}\left(f_n(\v{x})-\beta_0(\s{A}_1)- \vg{\beta}_{Z_p}(\s{A}_1)'\v{x}\right)^2d\v{x}-\int_{[0,1]^p}\left(f_n(\v{x})-\beta_0(\s{A}_0)-\vg{\beta}_{Z_p}(\s{A}_0)'\v{x}\right)^2d\v{x}  \\&\geqslant 12\tau^2-12(M-p_0+1)\eta_n^2,	 \end{align*}
		where $\beta_0(\s{A}_1),\vg{\beta}_{Z_p}(\s{A}_1),\beta_0(\s{A}_0),\vg{\beta}_{Z_p}(\s{A}_0)$ are defined by (7) and (8).
	\end{lemma}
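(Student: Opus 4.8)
The plan is to reduce both integrals to a single closed-form expression in the approximation coefficients and then compare them index by index. The first step is to invoke Lemma \ref{lemma6}: for any $\s{A}\subset Z_p$ and any $j\in\s{A}$, the coefficient $\beta_{Z_p}(\s{A})_j=12\bigl(\int_{[0,1]^p}x_jf_n\,d\v{x}-\tfrac12\int_{[0,1]^p}f_n\,d\v{x}\bigr)$ given by \eqref{betaAj} does \emph{not} depend on $\s{A}$, so I will write it simply as $\beta_j$, while \eqref{betaA0} gives $\beta_0(\s{A})=\mu-\tfrac12\sum_{j\in\s{A}}\beta_j$ with $\mu=\int_{[0,1]^p}f_n\,d\v{x}$. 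Consequently the fitted function rewrites as $\beta_0(\s{A})+\vg{\beta}_{Z_p}(\s{A})'\v{x}=\mu+\sum_{j\in\s{A}}\beta_j\bigl(x_j-\tfrac12\bigr)$.

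The key enabling observation is that $\{1\}\cup\{x_j-\tfrac12:j=1,\ldots,p\}$ is an orthogonal system in $L^2\bigl([0,1]^p\bigr)$: each $x_j-\tfrac12$ has zero integral, squared integral $1/12$, and is orthogonal to $1$ and to every other $x_k-\tfrac12$. Since the best linear approximation over $\s{A}$ is exactly the orthogonal projection of $f_n$ onto $\mathrm{span}\{1,\,x_j-\tfrac12:j\in\s{A}\}$, Parseval's identity collapses the loss to
$$\int_{[0,1]^p}\bigl(f_n-\beta_0(\s{A})-\vg{\beta}_{Z_p}(\s{A})'\v{x}\bigr)^2d\v{x}=\|f_n\|^2-\mu^2-\frac{1}{12}\sum_{j\in\s{A}}\beta_j^2,$$
where $\|\cdot\|$ denotes the $L^2\bigl([0,1]^p\bigr)$ norm. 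Subtracting this identity for $\s{A}_0$ from that for $\s{A}_1$ cancels the common terms $\|f_n\|^2$ and $\mu^2$, and cancelling the shared indices $\s{A}_0\cap\s{A}_1$ leaves the left-hand side equal to $\tfrac{1}{12}\bigl(\sum_{j\in\s{A}_0\setminus\s{A}_1}\beta_j^2-\sum_{j\in\s{A}_1\setminus\s{A}_0}\beta_j^2\bigr)$.

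To finish I would bound the two remaining sums using the activity estimates \eqref{eb1}--\eqref{eb2} (equivalently Theorem \ref{THM1}). Every index in $\s{A}_0\setminus\s{A}_1$ is active, hence $\beta_j^2>144\tau^2$, while every index in $\s{A}_1\setminus\s{A}_0$ is inactive, hence $\beta_j^2<144\eta_n^2$. Writing $k=|\s{A}_0\setminus\s{A}_1|\geq1$, which is nonempty precisely because $\s{A}_1\in\s{U}_1$, a cardinality count gives $|\s{A}_1\setminus\s{A}_0|=M-p_0+k$, so the difference strictly exceeds $12k\tau^2-12(M-p_0+k)\eta_n^2=12k(\tau^2-\eta_n^2)-12(M-p_0)\eta_n^2$.

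The main obstacle is not conceptual but a bookkeeping subtlety in this last count, namely the dependence on $k$: keeping only a single missed active variable is insufficient when $\s{A}_1$ omits several, so all $k$ active terms must be retained and balanced against the $M-p_0+k$ spurious ones. Since $\tau$ is a fixed positive constant while $\eta_n\to0$, one has $\tau^2\geq\eta_n^2$ for large $n$, making $12k(\tau^2-\eta_n^2)$ nondecreasing in $k$ and hence minimized at $k=1$; this yields exactly $12\tau^2-12(M-p_0+1)\eta_n^2$ and establishes the claim.
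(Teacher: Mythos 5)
Your proposal is correct and follows the same underlying skeleton as the paper's proof --- Lemma \ref{lemma6} to make the coefficients set-independent, then the activity bounds \eqref{eb1}--\eqref{eb2} applied to the indices in $\s{A}_0\setminus\s{A}_1$ and $\s{A}_1\setminus\s{A}_0$ --- but your middle step is executed by a genuinely different and cleaner device. The paper fixes (w.l.o.g.) $\s{A}_1=\{1,\ldots,d,p_0+1,\ldots,p_0+M-d\}$ and expands the difference of the two integrals by hand, cancelling cross terms one at a time to reach $\tfrac{1}{12}\sum_{j=d+1}^{p_0}\beta_{Z_p}(\s{A}_0)_j^2-\tfrac{1}{12}\sum_{j=p_0+1}^{p_0+M-d}\beta_{Z_p}(\s{A}_1)_j^2$; you instead note that $\{1\}\cup\{x_j-\tfrac12\}$ is an orthogonal system in $L^2[0,1]^p$, that the BLA over $\s{A}$ is the orthogonal projection onto its span, and let Parseval's identity deliver the same quantity $\tfrac{1}{12}\bigl(\sum_{j\in\s{A}_0\setminus\s{A}_1}\beta_j^2-\sum_{j\in\s{A}_1\setminus\s{A}_0}\beta_j^2\bigr)$ with no relabeling and no cross-term bookkeeping. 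Your route also makes transparent a point the paper glosses over: after the coefficient bounds one has $12k\tau^2-12(M-p_0+k)\eta_n^2$ with $k=|\s{A}_0\setminus\s{A}_1|\geqslant 1$, and passing to the stated $k=1$ bound requires $\tau\geqslant\eta_n$; you flag this explicitly (it holds for large $n$ since $\eta_n\to0$ in the asymptotic framework, and it is vacuous when $k=1$), whereas the paper performs the identical reduction silently in its final two inequalities. So your argument proves exactly what the paper proves, with the same reliance on Lemma \ref{lemma6}, but the projection/Parseval formulation is more conceptual and at least as rigorous at the one delicate step.
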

	\begin{proof}	Without loss of generality, assume $\s{A}_1=\{1,\ldots,d,p_0+1,\ldots,p_0+M-d\}$.
		By Lemma \ref{lemma6}, $\beta_{Z_p}(\s{A}_1)_j=\beta_{Z_p}(\s{A}_0)_j,\ j=1,\cdots,d$. Therefore, we have
		\begin{align*}
		&\int_{[0,1]^p}(f_n(\v{x})-\beta_0(\s{A}_1)- \vg{\beta}_{Z_p}(\s{A}_1)'\v{x})^2d\v{x}-\int_{[0,1]^p}(f_n(\v{x})-\beta_0(\s{A}_0)-\vg{\beta}_{Z_p}(\s{A}_0)'\v{x})^2d\v{x}  \\
		&=\int_{[0,1]^p}\left(f_n(\v{x})-\sum\limits_{j=1}^dx_j\beta_{Z_p}(\s{A}_0)_j-\sum\limits_{j=p_0+1}^{p_0+M-d}x_j\beta_{Z_p}(\s{A}_1)_j-\beta_0({\s{A}_1})\right)^2d\v{x}
		\\&\quad\quad-\int_{[0,1]^p}\left(f_n(\v{x})-\sum\limits_{j=1}^{p_0}x_j\beta_{Z_p}(\s{A}_0)_j-\beta_0({\s{A}_0})\right)^2d\v{x}\\
		&=\int_{[0,1]^p}\left(f_n(\v{x})-\sum\limits_{j=1}^dx_j\beta_{Z_p}(\s{A}_0)_j-\widetilde{\beta}_0\right)^2d\v{x}-\int_{[0,1]^p}\left(f_n(\v{x})-\sum\limits_{j=1}^{p_0}x_j\beta_{Z_p}(\s{A}_0)_j
		-\beta_0({\s{A}_0})\right)^2d\v{x}\\
		&\quad\quad+\int_{[0,1]^p}\left(\sum\limits_{j=p_0+1}^{p_0+M-d}x_j\beta_{Z_p}(\s{A}_1)_j-\frac{1}{2}\sum\limits_{j=p_0+1}^{p_0+M-d}\beta_{Z_p}(\s{A}_1)_j\right)^2d\v{x}\\
		&\quad\quad-2\int_{[0,1]^p}\left(f_n(\v{x})-\sum\limits_{j=1}^dx_j\beta_{Z_p}(\s{A}_0)_j-\widetilde{\beta}_0\right)\left(\sum\limits_{j=p_0+1}^{p_0+M-d}x_j\beta_{Z_p}(\s{A}_1)_j
		-\frac{1}{2}\sum\limits_{j=p_0+1}^{p_0+M-d}\beta_{Z_p}(\s{A}_1)_j\right)d\v{x}\\
		&=\int_{[0,1]^p}\sum\limits_{j=d+1}^{p_0}(2x_j-1)\beta_{Z_p}(\s{A}_0)_jf_n(\v{x})d\v{x}
		-\frac{1}{4}\int_{[0,1]^p}\left(\sum\limits_{j=d+1}^{p_0}(2x_j-1)\beta_{Z_p}(\s{A}_0)_j\right)^2d\v{x}\notag\\
		&\quad\quad+\frac{1}{4}\int_{[0,1]^p}\left(\sum\limits_{j=p_0+1}^{p_0+M-d}\left(2x_j-1\right)\beta_{Z_p}(\s{A}_1)_j\right)^2d\v{x}
		-\int_{[0,1]^p}\sum\limits_{j=p_0+1}^{p_0+M-d}(2x_j-1)\beta_{Z_p}(\s{A}_1)_jf_n(\v{x})d\v{x}\notag\\
		&=\frac{1}{12}\sum\limits_{j=d+1}^{p_0}{\beta_{Z_p}(\s{A}_0)_j}^2-\frac{1}{12}\sum\limits_{j=p_0+1}^{p_0+M-d}{\beta_{Z_p}(\s{A}_1)_j}^2\\
		&\geqslant 12(p_0-d)\tau^2-12(M-d)\eta_n^2\ \ (\text{by \eqref{eb1} and \eqref{eb2}})\\
		&\geqslant 12\tau^2-12(M-p_0+1)\eta_n^2,
		\end{align*}where $\widetilde{\beta}_0=\int_{[0,1]^p}f_n(\v{x})d\v{x}-\sum\limits_{j=1}^d\beta_{Z_p}(\s{A}_1)_j/2$.
		This completes the proof.
	\end{proof}

	\begin{lemma}\label{lemma:cof}
		Under Assumptions 1, 2, and 3, we have
		\begin{align*}
		\left\| \begin{pmatrix}
		\beta_0({\s{A}_0})\\\vg{\beta}({\s{A}_0})
		\end{pmatrix}-\begin{pmatrix}
		\widehat{\beta}_0({\s{A}_0})\\\widehat{\vg{\beta}}({\s{A}_0})
		\end{pmatrix}\right\|
		&\leqslant C_{1n}(p_0+1)(9p_0+1)\Delta_{p_0,n}+2(p_0+1)(9p_0+1)\eta_n\\
		&+\frac{3}{1-\alpha}C_{2n}(p_0+1)^3(9p_0+1)^2\Delta_{2,n}
		\end{align*}
		for sufficiently large $n$, where $(\beta_0({\s{A}_0}),\vg{\beta}({\s{A}_0})')'$ and $(\widehat{\beta}_0({\s{A}_0}),\widehat{\vg{\beta}}({\s{A}_0})')'$  are defined by (7) and (9), respectively.
	\end{lemma}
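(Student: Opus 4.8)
The plan is to turn the statement into a perturbation bound for the solution of a small $(p_0+1)\times(p_0+1)$ linear system. As in the derivation \eqref{betaZw}, the population coefficients $\vg{\theta}:=(\beta_0(\s{A}_0),\vg{\beta}(\s{A}_0)')'$ solve $\m{U}\vg{\theta}=\v{v}$, where $\m{U}$ and $\v{v}$ are the moment matrix and moment vector of \eqref{Zw}, assembled from the integrals $\int x_{a_j}x_{a_k}\,d\v{x}$ and $\int x_{a_j}f_n\,d\v{x}$. Dividing the normal equations \eqref{1} through by $n$ shows the estimator $\widehat{\vg{\theta}}:=(\widehat{\beta}_0(\s{A}_0),\widehat{\vg{\beta}}(\s{A}_0)')'$ solves $\widehat{\m{U}}\widehat{\vg{\theta}}=\widehat{\v{v}}$, where $\widehat{\m{U}},\widehat{\v{v}}$ are the empirical analogues obtained by replacing each integral with the sample average $\frac1n\sum_i(\cdot)$. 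I would then split the error as
\[
\widehat{\vg{\theta}}-\vg{\theta}=\m{U}^{-1}(\widehat{\v{v}}-\v{v})+(\widehat{\m{U}}^{-1}-\m{U}^{-1})\widehat{\v{v}},
\]
so that the first summand is engineered to yield the two $\Delta_{p_0,n}$ and $\eta_n$ terms and the second to yield the $\Delta_{2,n}$ term.

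For the matrix ingredients I would use the explicit inverse \eqref{Zn}: a direct eigenvalue estimate gives $\|\m{U}^{-1}\|\le 9p_0+1$, hence $\lambda_{\min}(\m{U})\ge(9p_0+1)^{-1}$. The entries of $\m{U}-\widehat{\m{U}}$ are differences between integrals and sample averages of the monomials $1,x_{a_j},x_{a_j}x_{a_k}$, each of which lies in BVHK with Hardy--Krause variation at most $3$ on the relevant face; Lemma \ref{lemma2} therefore bounds every entry by $3\Delta_{2,n}$, and the crude inequality $\|\cdot\|\le\sum_{ij}|\cdot|$ gives $\|\m{U}-\widehat{\m{U}}\|\le 3(p_0+1)^2\Delta_{2,n}$. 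Combined with $p_0\le M$, Assumption \ref{assumption 2.4} then forces $\|\m{U}^{-1}\|\,\|\m{U}-\widehat{\m{U}}\|\le 3(p_0+1)^2(9p_0+1)\Delta_{2,n}<\alpha<1$, so Lemma \ref{lemma1} keeps $\widehat{\m{U}}$ nonsingular and a Neumann-type estimate delivers both $\|\widehat{\m{U}}^{-1}\|\le(9p_0+1)/(1-\alpha)$ and $\|\widehat{\m{U}}^{-1}-\m{U}^{-1}\|\le\|\m{U}^{-1}\|\,\|\m{U}-\widehat{\m{U}}\|\,\|\widehat{\m{U}}^{-1}\|$.

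The moment vector is where the only genuine subtlety lies: $f_n$ itself need not be BVHK, so Lemma \ref{lemma2} cannot be applied to $\v{v}$ directly. I would split $f_n=\widetilde{f}_n+r_n$ with $r_n$ as in \eqref{r}, apply Koksma--Hlawka to the BVHK part $x_{a_j}\widetilde{f}_n$ (a product of BVHK functions by Lemma \ref{lemma:owen2}, with variation $\le C_{1n}$ under Assumption \ref{assumption 2.2}, and discrepancy $\le\Delta_{p_0,n}$), and bound the remainder crudely using $\sup|r_n|<\eta_n$ for both the integral and its average. This gives $|\v{v}_j-\widehat{\v{v}}_j|\le C_{1n}\Delta_{p_0,n}+2\eta_n$ on each of the $p_0+1$ coordinates, hence $\|\v{v}-\widehat{\v{v}}\|\le(p_0+1)(C_{1n}\Delta_{p_0,n}+2\eta_n)$; since each coordinate of $\widehat{\v{v}}$ is an average of $x_{a_j}f_n$ bounded by $C_{2n}$, also $\|\widehat{\v{v}}\|\le(p_0+1)C_{2n}$. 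Assembling, the first summand is at most $(9p_0+1)(p_0+1)(C_{1n}\Delta_{p_0,n}+2\eta_n)$, which is exactly the first two claimed terms, and the second is at most $\|\m{U}^{-1}\|\,\|\m{U}-\widehat{\m{U}}\|\,\|\widehat{\m{U}}^{-1}\|\,\|\widehat{\v{v}}\|\le\frac{3}{1-\alpha}C_{2n}(p_0+1)^3(9p_0+1)^2\Delta_{2,n}$, the third.

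The step I expect to require the most care is the joint control of $\widehat{\m{U}}^{-1}$: one must first certify that $\widehat{\m{U}}$ is invertible for large $n$ before expanding its inverse, and this is precisely what Assumption \ref{assumption 2.4} buys through the eigenvalue-perturbation bound of Lemma \ref{lemma1}. Everything else is bookkeeping with the two elementary norm inequalities while tracking the dimensional factors $(p_0+1)$ and $(9p_0+1)$ so they land on the stated terms; the decomposition $f_n=\widetilde{f}_n+r_n$ is what lets Koksma--Hlawka reach a function that is only nearly, rather than exactly, of bounded Hardy--Krause variation.
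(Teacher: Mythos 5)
Your proposal is correct and follows essentially the same route as the paper's proof: the same splitting $\m{Z}^{-1}\v{w}-\widehat{\m{Z}}^{-1}\widehat{\v{w}}=\m{Z}^{-1}(\v{w}-\widehat{\v{w}})+(\m{Z}^{-1}-\widehat{\m{Z}}^{-1})\widehat{\v{w}}$, the same use of the Koksma--Hlawka inequality on $x_j\widetilde{f}_n$ after writing $f_n=\widetilde{f}_n+r_n$ with $\sup|r_n|<\eta_n$, and the same eigenvalue-perturbation argument (Lemma \ref{lemma1} plus Assumption \ref{assumption 2.4} with $p_0\leqslant M$) to certify invertibility of $\widehat{\m{Z}}$ and bound $\|\widehat{\m{Z}}^{-1}\|$ by $(1-\alpha)^{-1}$ times $\|\m{Z}^{-1}\|$. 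The only deviation is bookkeeping --- you take the tighter symmetric-matrix bound $\|\m{Z}^{-1}\|\leqslant 9p_0+1$ but cruder $(p_0+1)$-factors on the vector and matrix-difference norms, where the paper uses $(p_0+1)^{1/2}(9p_0+1)$ and $(p_0+1)^{1/2}$-type factors throughout --- and the products land on exactly the same final constants.
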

	\begin{proof}	
		By (\ref{betaZw}), we have
		\begin{align*}
		\begin{pmatrix}
		\beta_0({\s{A}_0})\\\vg{\beta}({\s{A}_0})
		\end{pmatrix}=\m{Z}^{-1}\v{w},
		\end{align*}
		where $\m{Z}$ and $\v{w}$ are defined as $\m{U}$ and $\v{v}$ in (\ref{Zw}), respectively.
		Note that for any matrix $\m{A}=(a_{ij}) \in \mathbb{R}^{m \times n}$,
		\begin{align} \label{inequalitymatrix}
		\|\m{A}\| \leqslant \sqrt{m}\|\m{A}\|_\infty,
		\end{align}
		where $\|\m{A}\|_\infty=\max\{\sum_{j=1}^n |a_{1j}|,\ldots,\sum_{j=1}^n |a_{mj}|\}$ (Golub and Van Loan 1996).
		By (\ref{Zn}) and (\ref{inequalitymatrix}), we get
		\begin{align}\label{A.14}
		\|\m{Z}^{-1}\| \leqslant (p_0+1)^\frac{1}{2}(9p_0+1).
		\end{align}
		By (9), we have
		\begin{align*}
		\begin{pmatrix}
		\widehat{\beta}_0({\s{A}_0})\\\widehat{\vg{\beta}}({\s{A}_0})
		\end{pmatrix}=\widehat{\m{Z}}^{-1}\widehat{\v{w}},
		\end{align*}
		where
		\begin{align}\label{hatzw}
		\widehat{\m{Z}}=
		\begin{pmatrix}
		1 && \frac{\sum\limits_{i=1}^nx_{i1}}{n}  && \cdots &&  \frac{\sum\limits_{i=1}^nx_{ip_0}}{n}\\
		\frac{\sum\limits_{i=1}^nx_{i1}}{n} && \frac{\sum\limits_{i=1}^nx_{i1}^2}{n} &&  \cdots &&\frac{\sum\limits_{i=1}^nx_{i1} x_{ip_0}}{n}\\
		\vdots && \vdots && \ddots && \vdots\\
		\frac{\sum\limits_{i=1}^nx_{ip_0}}{n} && \frac{\sum\limits_{i=1}^nx_{i1}x_{ip_0}}{n} &&  \cdots &&\frac{\sum\limits_{i=1}^nx_{ip_0}^2}{n}
		\end{pmatrix}_{(p_0+1)\times (p_0+1)},
		\widehat{\v{w}}=
		\begin{pmatrix}
		\frac{\sum\limits_{i=1}^n y_i}{n}\\
		\frac{\sum\limits_{i=1}^n x_{i1}y_i}{n}\\
		\vdots\\
		\frac{\sum\limits_{i=1}^n x_{ip_0}y_i}{n}
		\end{pmatrix}.
		\end{align}

		It follows from Lemma \ref{lemma2} that $$\left|\int_{[0,1]^{p_0}}\widetilde{f}_n(\v{x}_{\s{A}_0})d\v{x}_{\s{A}_0}-\frac{\sum\limits_{i=1}^n\widetilde{f}_n(x_{i1},\ldots,x_{ip_0})}{n}\right|\leqslant \Delta_{p_0,n}V_{HK}(\widetilde{f}_n),$$where $\Delta_{p_0,n}$ is defined in Section 3. By (5), $\sup_{\v{x} \in [0,1]^p} |r_n(\v{x})|\leqslant \eta_n$, where $r_n$ is defined in \eqref{r}. We have 	 $$\int_{[0,1]^p}f_n(\v{x})d\v{x}=\int_{[0,1]^{p_0}}\widetilde{f}_n(\v{x}_{\s{A}_0})d\v{x}_{\s{A}_0}+\int_{[0,1]^p}r_n(\v{x})d\v{x},$$and	 $$\frac{\sum\limits_{i=1}^ny_i}{n}=\frac{\sum\limits_{i=1}^nf_n(\v{x}_i)}{n}=\frac{\sum\limits_{i=1}^n\widetilde{f}_n(x_{i1},\ldots,x_{ip_0})}{n}
		+\frac{\sum\limits_{i=1}^nr_n(\v{x}_i)}{n}.$$
		Therefore. \begin{equation}\label{vb}\left|\int_{[0,1]^p}f_n(\v{x})d\v{x}- \frac{\sum\limits_{i=1}^ny_i}{n}\right|<\Delta_{p_0,n}V_{HK}(\widetilde{f}_n)+2\eta_n.\end{equation}
		On the other hand, by Lemma \ref{lemma3}, Lemma \ref{lemma:owen2}, and Assumption 2, $x_j\widetilde{f}_n \in$ BVHK for $j=1,\ldots,p$. Similar to \eqref{vb}, \begin{equation}\label{vbh}
		\left|\int_{[0,1]^p}x_jf_n(\v{x})d\v{x}- \frac{\sum\limits_{i=1}^nx_{ij}y_i}{n}\right|<\Delta_{p_0,n}V_{HK}(x_j\widetilde{f}_n)+2\eta_n,\ j=1,\ldots,p_0.\end{equation}Combining \eqref{vb} and \eqref{vbh}, by \eqref{inequalitymatrix},  we have
		\begin{align}\label{A.15}
		\lVert\v{w}-\widehat{\v{w}}\rVert < (p_0+1)^\frac{1}{2}\left(C_{1n}\Delta_{p_0,n}+2\eta_n\right),
		\end{align}where $C_{1n}$ is defined in Section 3.
		
		By Lemma 3, for $h(x_1,x_2)=x_1x_2,\ x_1,x_2\in[0,1]$,
		\begin{align}\label{A.16}
		V_{HK}(h)\leqslant 3.
		\end{align}
		By (\ref{inequalitymatrix}), (\ref{A.16})  and
		Lemma \ref{lemma2}, we have
		\begin{align}\label{A.17}
		\lVert\m{Z}-\widehat{\m{Z}}\rVert \leqslant 3(p_0+1)^\frac{3}{2} \Delta_{2,n}.
		\end{align}
		By Lemma \ref{lemma1}, we get
		\begin{align}\label{A.18}
		|\lambda_{\min}(\m{Z})-\lambda_{\min}(\widehat{\m{Z}})|\leqslant \|\m{Z}-\widehat{\m{Z}}\|,
		\end{align}
		where $\lambda_{\min}(\cdot)$ represents the minimum eigenvalue of a matrix.
		
		By (\ref{A.14}), (\ref{A.17}), (\ref{A.18}), and Assumption 3, we have
		\begin{align}
		\left\lVert\m{Z}^{-1}-\widehat{\m{Z}}^{-1}\right\rVert &\leqslant \left\lVert\m{Z}^{-1}\right\rVert \left\lVert\m{Z}-\widehat{\m{Z}}\right\rVert \left\lVert\widehat{\m{Z}}^{-1}\right\rVert \notag
		=\left\|\m{Z}^{-1}\right\|\left\|\m{Z}-\widehat{\m{Z}}\right\| \frac{1}{\lambda_{\min}(\widehat{\m{Z}})}\notag\\
		&\leqslant \left\|\m{Z}^{-1}\right\|\left\|\m{Z}-\widehat{\m{Z}}\right\|\frac{1}{\lambda_{\min}(\m{Z})-\left\|\m{Z}-\widehat{\m{Z}}\right\|}\notag=\frac{\left\|\m{Z}^{-1}\right\|^2\left\|\m{Z}
			-\widehat{\m{Z}}\right\|}{1-\left\|\m{Z}^{-1}\right\|\left\|\m{Z}-\widehat{\m{Z}}\right\|}\notag\\
		&\leqslant \frac{3}{1-\alpha}(p_0+1)^\frac{5}{2}(9p_0+1)^2\Delta_{2,n}\label{A.19}
		\end{align}
		for sufficiently large $n$.
		
		Note that $\left|\sum\limits_{i=1}^n y_i/n\right|<C_{2n}$, and that $\left|\sum\limits_{i=1}^n x_{ij}y_i/n\right|<C_{2n}$ since $x_{ij} \in [0,1),\ i=1,\cdots,n,\ j=1,\cdots,p$, where $C_{2n}$ is defined in Section 3.
		By (\ref{inequalitymatrix}) and \eqref{hatzw}, we have
		\begin{align} \label{w}
		\left	\|\widehat{\v{w}}\right\|\leqslant (p_0+1)^\frac{1}{2}C_{2n} .
		\end{align}
		Thus, by (\ref{A.14}), (\ref{A.15}), (\ref{A.18}), and (\ref{w}), we have
		\begin{align*}
		&\left\lVert\begin{pmatrix}
		\beta_0({\s{A}_0})\\\vg{\beta}({\s{A}_0})
		\end{pmatrix}-\begin{pmatrix}
		\widehat{\beta}_0({\s{A}_0})\\\widehat{\vg{\beta}}({\s{A}_0})
		\end{pmatrix}\right\Vert=\left\lVert\m{Z}^{-1}\v{w}-\v{\widehat{Z}}^{-1}\v{\widehat{w}}\right\rVert
		\\&=\left\lVert\m{Z}^{-1}\v{w}-\m{Z}^{-1}\v{\widehat{w}}+\m{Z}^{-1}\v{\widehat{w}}-\v{\widehat{Z}}^{-1}\v{\widehat{w}}\right\rVert
		\leqslant \left\lVert\m{Z}^{-1}\right\rVert \left\lVert\v{w}-\widehat{\v{w}}\right\rVert+\left\lVert\m{Z}^{-1}-\m{\widehat{Z}}^{-1}\right\rVert \left\lVert\v{\widehat{w}}\right\rVert\\
		&\leqslant (p_0+1)(9p_0+1)(C_{1n}\Delta_{p_0,n}+2\eta_n)+\frac{3}{1-\alpha}C_{2n}(p_0+1)^3(9p_0+1)^2\Delta_{2,n} \\
		&=C_{1n}(p_0+1)(9p_0+1)\Delta_{p_0,n}+2(p_0+1)(9p_0+1)\eta_n+\frac{3}{1-\alpha}C_{2n}(p_0+1)^3(9p_0+1)^2\Delta_{2,n}
		\end{align*}
		for sufficiently large $n$.
	\end{proof}

	\begin{lemma}\label{lemma del}
		Under Assumptions 5 and 6, for $\gamma_0$ with $0<\gamma_0<\tilde{\gamma}_0$, as $n\to\infty$,
		\begin{align*}
		P\left(\{\Delta_{p_0,n} \leqslant n^{-\gamma_0}\} \cap \{\Delta_{2,n} \leqslant n^{-\gamma_0}\} \cap \{\Delta_{p_0+1,n} \leqslant n^{-\gamma_0}\} \cap \{\Delta_{p_0+M-1,n} \leqslant n^{-\gamma_0}\} \right) \to 1.
		\end{align*}
	\end{lemma}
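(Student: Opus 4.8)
The plan is to reduce each discrepancy $\delta_{d,n}(\m{X}_\s{A})$ to a multivariate Kolmogorov--Smirnov statistic and then control the maxima over the subset families $\s{U}_2,\s{U}_3,\s{U}_4$ by a union bound. First I observe that under Assumption \ref{assumption rs} the entries $x_{ij}$ are i.i.d. uniform on $[0,1)$, so for any fixed $\s{A}\subset Z_p$ with $|\s{A}|=d$ the rows of $\m{X}_\s{A}$ are i.i.d. uniform on $[0,1)^d$ with true distribution function $\prod_{i=1}^d x_i$. By Definition \ref{def:d}, $\delta_{d,n}(\m{X}_\s{A})$ is then exactly $\sup_\v{x}\left|F(\v{x})-F_n(\v{x})\right|$, so Lemma \ref{lemma4} applies: for the fixed dimension $d$ there exist constants $c_0,c>0$ with
$$P\left(\delta_{d,n}(\m{X}_\s{A})>n^{-\gamma_0}\right)\leqslant c_0\exp\left(-cn^{1-2\gamma_0}\right).$$
Since $p_0$ and $M$ are fixed in Assumption \ref{assumption gamma}, the four relevant dimensions $p_0,\,2,\,p_0+1,\,p_0+M-1$ are all constants, so only finitely many dimensions occur and I may take a single pair $(c_0,c)$ uniform over them.

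Next I would apply the union bound to each maximum. For $\Delta_{p_0,n}=\delta_{p_0,n}(\m{X}_{\s{A}_0})$ there is nothing to do, since it is a single discrepancy. For the other three, the index families have cardinalities $|\s{U}_2|=\binom{p}{2}$, $|\s{U}_3|=p-p_0$, and $|\s{U}_4|=\binom{p-p_0}{M-1}$, each polynomial in $p$ of degree at most $M$. Hence, for instance,
$$P\left(\Delta_{p_0+M-1,n}>n^{-\gamma_0}\right)\leqslant |\s{U}_4|\,c_0\exp\left(-cn^{1-2\gamma_0}\right)\leqslant c_0\exp\left((M-1)\log p-cn^{1-2\gamma_0}\right),$$
and analogously for $\Delta_{2,n}$ and $\Delta_{p_0+1,n}$, whose exponents pick up $2\log p$ and $\log(p-p_0)$ respectively.

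The decisive step is to show each such bound tends to zero, which is where the rate conditions in Assumption \ref{assumption gamma} enter. By hypothesis $\log p=O(n^{\gamma_{10}})$ with $\gamma_{10}<1$, and the constraint $\gamma_0<\tilde{\gamma}_0=(1-\gamma_{10})/2$ yields $1-2\gamma_0>\gamma_{10}$. Consequently the negative term $-cn^{1-2\gamma_0}$ dominates the $O(n^{\gamma_{10}})$ contribution of $(M-1)\log p$ (and of $2\log p$), so every one of the four exceedance probabilities vanishes as $n\to\infty$. Finally I would combine the four events by the union bound on their complements,
$$1-P\left(\bigcap\left\{\Delta_{\cdot}\leqslant n^{-\gamma_0}\right\}\right)\leqslant\sum P\left(\Delta_{\cdot}>n^{-\gamma_0}\right)\longrightarrow 0,$$
which gives the claim.

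There is no genuine obstacle here, only bookkeeping that must be handled with care: the dimension-dependent Kiefer--Wolfowitz constants can be chosen uniformly precisely because $p_0$ and $M$ are held fixed so that only finitely many dimensions arise, and the combinatorial factors $\binom{p}{2}$ and $\binom{p-p_0}{M-1}$ contribute only $O(\log p)=O(n^{\gamma_{10}})$ to the exponent, so the single rate condition $1-2\gamma_0>\gamma_{10}$ is exactly what makes all four exponents diverge to $-\infty$.
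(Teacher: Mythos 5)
Your proposal is correct and follows essentially the same route as the paper's proof: identify each $\delta_{d,n}(\m{X}_{\s{A}})$ with the Kolmogorov--Smirnov statistic for the uniform distribution so that the Kiefer--Wolfowitz lemma gives an exponential tail bound, apply a union bound over $\s{U}_2$, $\s{U}_3$, $\s{U}_4$ whose cardinalities are polynomial in $p$, and use $1-2\gamma_0>\gamma_{10}$ (equivalently $\gamma_{10}+2\gamma_0<1$) to make each exponent $O(n^{\gamma_{10}})-cn^{1-2\gamma_0}$ diverge to $-\infty$. The only cosmetic difference is that the paper uses separate constants $D_1,\ldots,D_4$ for the four dimensions and bounds $|\s{U}_4|$ by $(p-p_0)^{M-1}$ rather than $\binom{p-p_0}{M-1}$, neither of which changes the argument.
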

	\begin{proof}
		Note that in Lemma \ref{lemma4}, $\sup\limits_{\v{x}}\left |F(\v{x})-F_n(\v{x})\right|$ becomes the $L_\infty$ discrepancy $\nu_{p,n}$ when $F$ is the uniform distribution on $[0,1]^p$.
		
		By Lemma \ref{lemma4}, since $1-2\gamma_0>0$ in Assumption 6, we have
		\begin{align}
		P(\Delta_{p_0,n} \leqslant n^{-\gamma_0})=1-O(\exp{(-D_1n^{(1-2\gamma_0)})})\to 1,\label{A.27}
		\end{align}
		where $D_1$ is a positive constant.
		
		Furthermore, by Lemma \ref{lemma4}, since $\gamma_{10}+\ 2\gamma_0<1$ in Assumption 6, we have
		\begin{align}
		P(\Delta_{2,n} \leqslant n^{-\gamma_0}) &\geqslant 1-\sum_{\s{A} \in \s{U}_2}P(\delta_{2,n}(\s{A}) \leqslant n^{-\gamma_0})\notag \geqslant 1-\binom{p}{2}D_0\exp\left(-D_2n^{(1-2\gamma_0)}\right) \notag\\
		&=1-p^2O\left(\exp{\left(-D_2n^{(1-2\gamma_0)}\right)}\right)=1-O\left(\exp{\left(2n^{\gamma_{10}}-D_2n^{(1-2\gamma_0)}\right)}\right)\notag\\&\to 1,\label{A.28}
		\end{align}
		where $D_0$ and $D_2$ are positive constants.
		
		Similarly, we have
		\begin{align}
		P(\Delta_{p_0+1,n} \leqslant n^{-\gamma_0}) &\geqslant 1-\sum_{\s{A} \in \s{U}_3}P\left(\delta_{p_0+1,n}(\s{A}) \leqslant n^{-\gamma_0}\right)\notag\\ &=1-(p-p_0)O\left(\exp{\left(-D_3n^{(1-2\gamma_0)}\right)}\right)\to 1,\label{A.29}
		\end{align}and
		\begin{align}
		P(\Delta_{p_0+M-1,n} \leqslant n^{-\gamma_0}) &\geqslant 1-\sum_{\s{A} \in \s{U}_4}P(\delta_{p_0+M-1,n}(\s{A}) \leqslant n^{-\gamma_0}) \notag
		\\&=1-(p-p_0)^{(M-1)}O\left(\exp{\left(-D_4n^{(1-2\gamma_0)}\right)}\right)\to 1,\label{A.30}
		\end{align}
		where $D_3$ and $D_4$ are positive constants.
		
		Combining (\ref{A.27}), (\ref{A.28}), (\ref{A.29}), and (\ref{A.30}), we complete the proof.
	\end{proof}

	\subsection{Proofs of theorems}
	\begin{proof}{(Proof of Theorem 1)}		
		By (3), (4), and Assumption 1, this proof is similar to those of \eqref{eb1} and \eqref{eb2}.
	\end{proof}

	\begin{proof}{(Proof of Theorem 2)}\\
		Note that for matrices $\m{A},\m{B} \in \mathbb{R}^{n\times n}$ (Golub and Van Loan 1996),
		\begin{align}\label{matrixtimes}
		\|\m{A}\m{B}\| \leqslant \|\m{A}\|\|\m{B}\|.
		\end{align}
		Since all the elements of $\m{X}$ lie in $[0,1)$, by (\ref{inequalitymatrix}), we get
		\begin{align} \label{B.1}
		\left \| \begin{pmatrix}
		\vg{1}_n  & \m{X}
		\end{pmatrix}
		\right	\|  \leqslant \sqrt{n}(p_0+1),
		\end{align}
		Therefore. by
		(\ref{matrixtimes}), (\ref{B.1}), and Lemma \ref{lemma:cof}, we have
		\begin{align}
		&\frac{1}{n}\left\|\vg{1}_n\beta_0({\s{A}_0})+\m{X}\vg{\beta}_{Z_p}({\s{A}_0})-\vg{1}_n\widehat{\beta}_0({\s{A}_0})-\m{X}\widehat{\vg{\beta}}_{Z_p}({\s{A}_0})\right\|^2 \notag\\ & \leqslant \frac{1}{n}\left \| \begin{pmatrix}
		\vg{1}_n  & \m{X}
		\end{pmatrix}
		\right	\|^2 \left \| \begin{pmatrix}
		\beta_0({\s{A}_0})\\\vg{\beta}_{Z_p}({\s{A}_0})
		\end{pmatrix}-\begin{pmatrix}
		\widehat{\beta}_0({\s{A}_0})\\\widehat{\vg{\beta}}_{Z_p}({\s{A}_0})
		\end{pmatrix}\right	\|^2 \leqslant \zeta_{1n}^2\label{A.20}
		\end{align}
		for sufficiently large $n$,
		where $\zeta_{1n}$ is defined in Section 3.
		By Assumption 1 and the definition of $C_{3n}$ in Section 3,
		\begin{align}\label{A.21}
		\max\limits_ {i=1,\ldots,n}\left|y_i-\beta_0({\s{A}_0})-\vg{\beta}_{Z_p}({\s{A}_0})'\v{x}_i\right|<C_{3n}+\eta_n.
		\end{align}
		By \eqref{inequalitymatrix}, (\ref{A.20}), and (\ref{A.21}),
		\begin{align}
		&\frac{2}{n}\left\|\vg{1}_n\beta_0({\s{A}_0})+\m{X}\vg{\beta}_{Z_p}({\s{A}_0})-\vg{1}_n\widehat{\beta}_0({\s{A}_0})-\m{X}\widehat{\vg{\beta}}_{Z_p}({\s{A}_0})\right\| \left\|\v{y}-\vg{1}_n\beta_0({\s{A}_0})-\m{X}\vg{\beta}_{Z_p}({\s{A}_0})\right\|\notag\\
		&\leqslant 2(C_{3n}+\eta_n)\zeta_{1n}.\label{A.22}
		\end{align}
		Combining (\ref{A.20}), (\ref{A.21}), and (\ref{A.22}), we have
		\begin{align}
		&\frac{1}{n}\left\|\v{y}-\vg{1}_n\widehat{\beta}_0({\s{A}_0})-\m{X}\widehat{\vg{\beta}}_{Z_p}({\s{A}_0})\right\|^2\notag\\
		&=\frac{1}{n}\left\|\v{y}-\vg{1}_n\beta_0({\s{A}_0})-\m{X}\vg{\beta}_{Z_p+}({\s{A}_0})+\vg{1}_n\beta_0({\s{A}_0})
		+\m{X}\vg{\beta}_{Z_p}({\s{A}_0})-\vg{1}_n\widehat{\beta}_0({\s{A}_0})-\m{X}\widehat{\vg{\beta}}_{Z_p}({\s{A}_0})\right\|^2\notag\\
		&\leqslant\frac{1}{n}\left\|\v{y}-\vg{1}_n\beta_0({\s{A}_0})-\m{X}\vg{\beta}_{Z_p}({\s{A}_0})\right\|^2+\frac{1}{n}\left\|\vg{1}_n\beta_0({\s{A}_0})
		+\m{X}\vg{\beta}_{Z_p}({\s{A}_0})-\vg{1}_n\widehat{\beta}_0({\s{A}_0})-\m{X}\widehat{\vg{\beta}}_{Z_p}({\s{A}_0})\right\|^2\notag\\
		&+\frac{2}{n}\left\|\vg{1}_n\beta_0({\s{A}_0})+\m{X}\vg{\beta}_{Z_p}({\s{A}_0})-\vg{1}_n\widehat{\beta}_0({\s{A}_0})
		-\m{X}\widehat{\vg{\beta}}_{Z_p}({\s{A}_0})\right\|\left\|\v{y}-\vg{1}_n\beta_0({\s{A}_0})-\m{X}\vg{\beta}_{Z_p}({\s{A}_0})\right\|\notag
		\\
		&\leqslant\frac{1}{n}\left\|\v{y}-\vg{1}_n\beta_0({\s{A}_0})-\m{X}\vg{\beta}_{Z_p}({\s{A}_0})\right\|^2+\zeta_{1n}^2 + 2\left(C_{3n}+\eta_n\right)\zeta_{1n}\notag\\
		&\leqslant \frac{1}{n}\left\|\v{y}-\vg{1}_n\beta_0({\s{A}_0})-\m{X}\vg{\beta}_{Z_p}({\s{A}_0})\right\|^2+\rho_{1n}\label{A.23}
		\end{align}
		for sufficiently large $n$.
		In addition, we have
		\begin{align}
		&\max\limits_ {i=1,\ldots,n}\left|\beta_0(\s{A}_0)+\vg{\beta}_{Z_p}({\s{A}_0})'\v{x}_i-\widehat{\beta}_0(\s{A}_0)-\widehat{\vg{\beta}}_{Z_p}({\s{A}_0})'\v{x}_i\right| \notag \\
		& \leqslant \left|\beta_0(\s{A}_0)-\widehat{\beta}_0(\s{A}_0)\right| + \sum_{j=1}^{p_0}\left|\beta_{Z_p}({\s{A}_0})_j-\widehat{\beta}_{Z_p}({\s{A}_0})_j\right| \notag\\
		& \leqslant \left(p_0+1\right)^{1/2}\left\| \begin{pmatrix}
		\beta_0({\s{A}_0})\\\vg{\beta}({\s{A}_0})
		\end{pmatrix}-\begin{pmatrix}
		\widehat{\beta}_0({\s{A}_0}) \notag\\
		\widehat{\vg{\beta}}({\s{A}_0})
		\end{pmatrix}\right\| \notag\\
		& \leqslant (p_0+1)^{-1/2}\zeta_{1n}\quad(\text{By Lemma \ref{lemma:cof}}) \label{A.30}
		\end{align}
		for sufficiently large $n$.
		By (\ref{A.21}) and (\ref{A.30}), for sufficiently large $n$,
		\begin{align*}
		&\max\limits_ {i=1,\ldots,n}\left|y_i-\widehat{\beta}_0(\s{A}_0)-\widehat{\vg{\beta}}_{Z_p}({\s{A}_0})'\v{x}_i\right|\\
		&\leqslant \max\limits_ {i=1,\ldots,n}\left|y_i-\beta_0(\s{A}_0)-\vg{\beta}_{Z_p}({\s{A}_0})'\v{x}_i\right|+	 \max\limits_ {i=1,\ldots,n}\left|\beta_0(\s{A}_0)+\vg{\beta}_{Z_p}({\s{A}_0})'\v{x}_i-\widehat{\beta}_0(\s{A}_0)-\widehat{\vg{\beta}}_{Z_p}({\s{A}_0})'\v{x}_i\right|\\
		&<C_{3n}+\eta_n+(p_0+1)^{-1/2}\zeta_{1n}.
		\end{align*} Similar to (\ref{A.23}), we have
		\begin{align}
		&\frac{1}{n}\left\|\v{y}-\vg{1}_n\beta_0({\s{A}_0})-\m{X}\vg{\beta}_{Z_p}({\s{A}_0})\right\|^2\notag\\
		&=\frac{1}{n}\left\|\v{y}-\vg{1}_n\widehat{\beta}_0({\s{A}_0})-\m{X}\widehat{\vg{\beta}}_{Z_p}({\s{A}_0})+\vg{1}_n\widehat{\beta}_0({\s{A}_0})
		-\m{X}\widehat{\vg{\beta}}_{Z_p}({\s{A}_0})
		-\vg{1}_n\beta_0({\s{A}_0})-\m{X}\vg{\beta}_{Z_p}({\s{A}_0})\right\|^2\notag\\
		&\leqslant\frac{1}{n}\left\|\v{y}-\vg{1}_n\widehat{\beta}_0({\s{A}_0})-\m{X}\widehat{\vg{\beta}}_{Z_p}({\s{A}_0})\right\|^2
		+\frac{1}{n}\left\|\vg{1}_n\widehat{\beta}_0({\s{A}_0})-\m{X}\widehat{\vg{\beta}}_{Z_p}({\s{A}_0})
		-\vg{1}_n\beta_0({\s{A}_0})-\m{X}\vg{\beta}_{Z_p}({\s{A}_0})\right\|^2\notag\\
		&+\frac{2}{n}\left\|\v{y}-\vg{1}_n\widehat{\beta}_0({\s{A}_0})-\m{X}\widehat{\vg{\beta}}_{Z_p}({\s{A}_0})\right\|\left\|\vg{1}_n\widehat{\beta}_0({\s{A}_0})
		-\m{X}\widehat{\vg{\beta}}_{Z_p}({\s{A}_0})
		-\vg{1}_n\beta_0({\s{A}_0})-\m{X}\vg{\beta}_{Z_p}({\s{A}_0})\right\| \notag\\
		&\leqslant \frac{1}{n}\left\|\v{y}-\vg{1}_n\widehat{\beta}_0({\s{A}_0})-\m{X}\widehat{\vg{\beta}}_{Z_p}({\s{A}_0})\right\|^2+\zeta_{1n}^2
		+2(C_{3n}+\eta_n)\zeta_{1n}+2(p_0+1)^{-1/2}\zeta_{1n}^2\notag\\
		&=\frac{1}{n}\left\|\v{y}-\vg{1}_n\widehat{\beta}_0({\s{A}_0})-\m{X}\widehat{\vg{\beta}}_{Z_p}({\s{A}_0})\right\|^2+\rho_{1n}.
		\label{A.24}
		\end{align}
		Combining (\ref{A.23}) and (\ref{A.24}), we get
		\begin{align}\label{A.25}
		\left| \frac{1}{n}\left\|\v{y}-\vg{1}_n\beta_0(\s{A}_0)-\m{X}\vg{\beta}_{Z_p}({\s{A}_0})\right\|^2-\frac{1}{n}\left\|\v{y}-\vg{1}_n\widehat{\beta}_0(\s{A}_0)
		-\m{X}\widehat{\vg{\beta}}_{Z_p}({\s{A}_0})\right\|^2\right|\leqslant \rho_{1n}
		\end{align}
		for sufficiently large $n$.

		Besides, by Assumption 1 and the definition of $C_{3n}$ in Section 3, we have
		\begin{align}
		&\left|\frac{1}{n}\left\|\v{y}-\vg{1}_n\beta_0(\s{A}_0)-\m{X}\vg{\beta}_{Z_p}({\s{A}_0})\right\|^2-\frac{1}{n}\left\|\widetilde{\vg{f}}_n(\m{X}_{\s{A}_0})
		-\vg{1}_n\beta_0(\s{A}_0)-\m{X}\vg{\beta}_{Z_p}({\s{A}_0})\right\|^2\right| \notag\\
		&\leqslant \frac{1}{n}\left\|\v{y}-\widetilde{\vg{f}}_n(\m{X}_{\s{A}_0})\right\|^2 +\frac{2}{n}\left\|\v{y}-\widetilde{\vg{f}}_n(\m{X}_{\s{A}_0})\right\|\left\|\widetilde{\vg{f}}_n(\m{X}_{\s{A}_0})-\vg{1}_n\beta_0(\s{A}_0)-\m{X}\vg{\beta}_{Z_p}({\s{A}_0})\right\|\notag\\
		&\leqslant\eta_n^2+2\eta_n C_{3n}, \label{A.33}
		\end{align}where $\widetilde{\vg{f}}_n(\v{X}_{\s{A}_0})=\begin{pmatrix}
		\widetilde{f}_n({x_{11},\cdots,x_{1p_0}}),\cdots,
		\widetilde{f}_n({x_{n1},\cdots,x_{np_0}})
		\end{pmatrix}'$,
		and
		\begin{align}
		&
		\left |\int_{[0,1]^p}\left(f_n(\v{x})-\beta_0(\s{A}_0)-\v{x}'\vg{\beta}_{Z_p}(\s{A}_0)\right)^2d\v{x}-\int_{[0,1]^{p_0}}\left(\widetilde{f}_n(\v{x}_{\s{A}_0})-\beta_0(\s{A}_0)
		-\v{x}'\vg{\beta}_{Z_p}(\s{A}_0)\right)^2d\v{x}_{\s{A}_0}\right| \notag\\
		&\leqslant\int_{[0,1]^p}\left(f_n(\v{x})-\widetilde{f}_n(\v{x}_{\s{A}_0})\right)^2d\v{x}+2\int_{[0,1]^p}\left|f_n(\v{x})
		-\widetilde{f}_n(\v{x}_{\s{A}_0})\right|\left|\widetilde{f}_n(\v{x}_{\s{A}_0})
		-\beta_0(\s{A}_0)-\v{x}'\vg{\beta}_{Z_p}(\s{A}_0)\right|d\v{x} \notag\\
		&\leqslant\eta_n^2+2\eta_n C_{3n}. \label{A.34}
		\end{align}
		By Assumption 2, Lemma \ref{lemma3}, and Lemma \ref{lemma:owen2}, $\widetilde{f}_n(\v{x}_{\s{A}_0})-\beta_0(\s{A}_0)-\v{x}'\vg{\beta}_{Z_p}(\s{A}_0) \in$ BVHK, Thus, by Lemma \ref{lemma2}. we have
		\begin{align} \label{A.35}
		&\left|\frac{1}{n}\left\|\widetilde{\vg{f}}_n(\v{X}_{\s{A}_0})-\vg{1}_n\beta_0(\s{A}_0)-\m{X}\vg{\beta}_{Z_p}(\s{A}_0)\right\|^2
		-\int_{[0,1]^{p_0}}\left(\widetilde{f}_n(\v{x}_{\s{A}_0})-\beta_0(\s{A}_0)
		-\v{x}'\vg{\beta}_{Z_p}(\s{A}_0)\right)d\v{x}\right| \notag
		\\&\leqslant \Delta_{p_0,n}V_{1n}.
		\end{align}
		By (\ref{A.33}), (\ref{A.34}), and (\ref{A.35}),
		\begin{align}
		&\left|\frac{1}{n}\left\|\v{y}-\vg{1}_n\beta_0(\s{A}_0)-\m{X}\vg{\beta}_{Z_p}({\s{A}_0})\right\|^2-\int_{[0,1]^p}\left(f_n(\v{x})-\beta_0(\s{A}_0)
		-\v{x}'\vg{\beta}_{Z_p}(\s{A}_0)\right)^2d\v{x}\right| \notag\\
		&\leqslant 2(\eta_n^2+2\eta_n C_{3n})+\Delta_{p_0,n}V_{1n},\label{A.26}
		\end{align}
		Combing (\ref{A.25}) and (\ref{A.26}), we get
		\begin{align}
		&\left|\frac{1}{n}\left\|\v{y}-\vg{1}_n\widehat{\beta}_0(\s{A}_0)-\m{X}\widehat{\vg{\beta}}_{Z_p}({\s{A}_0})\right\|^2-\int_{[0,1]^p}\left(f_n(\v{x})
		-\beta_0(\s{A}_0)-\v{x}'\vg{\beta}_{Z_p}(\s{A}_0)\right)^2d\v{x}\right| \notag\\
		&\leqslant2\left(\eta_n^2+2\eta_n C_{3n}\right)+\Delta_{p_0,n}V_{1n}+\rho_{1n}, \label{f1}
		\end{align}
		
		Now consider any $\s{A}_1 \in \s{U}_1$. Similar to the proof of Lemma \ref{lemma:cof}, we have
		\begin{align*}
		\left\|\begin{pmatrix}
		\widehat{\beta}_0(\s{A}_1)\\\widehat{\vg{\beta}}({\s{A}_1})
		\end{pmatrix}-
		\begin{pmatrix}
		\beta_0(\s{A}_1)\\\vg{\beta}({\s{A}_1})
		\end{pmatrix}\right\| \leqslant& C_{1n}(M+1)(9M+1)\Delta_{p_0+1,n}+2(M+1)(9M+1)\eta_n\\+&\frac{3}{1-\alpha}C_{2n}(M+1)^3(9M+1)^2\Delta_{2,n}
		\end{align*}
		for sufficiently large $n$. In addition, similar to the proof of (\ref{f1}), we have
		\begin{align}
		&
		\left|\frac{1}{n}\left\|\v{y}-\vg{1}_n\widehat{\beta}_0(\s{A}_1)-\v{X}\widehat{\vg{\beta}}_{Z_p}({\s{A}_1})\right\|^2-\int_{[0,1]^p}\left(f_n(\v{x})
		-\beta_0(\s{A}_1)-\v{x}'\vg{\beta}_{Z_p}(\s{A}_1)\right)^2d\v{x}\right|  \notag\\
		&\leqslant2(\eta_n^2+2\eta_nC_{4n})+\Delta_{p_0+M-1,n}V_{2n}+\rho_{2n} \label{f2}
		\end{align}
		for sufficiently large $n$.\\
		By (\ref{f1}), (\ref{f2}), Lemma \ref{lemma:a0}, and Assumption 4,
		\begin{align*}
		&	\frac{1}{n}\left\|\v{y}-\vg{1}_n\widehat{\beta}_0(\s{A}_1)-\m{X}\widehat{\vg{\beta}}_{Z_p}({\s{A}_1})\right\|^2
		-\frac{1}{n}\left\|\v{y}-\vg{1}_n\widehat{\beta}_0(\s{A}_0)-\v{X}\widehat{\vg{\beta}}_{Z_p}({\s{A}_0})\right\|^2\\
		&\geqslant\int_{[0,1]^p}\left(f_n(\v{x})-\beta_0(\s{A}_1)-\v{x}'\vg{\beta}_{Z_p}(\s{A}_1)\right)^2d\v{x}-2\left(\eta_n^2+2\eta_n C_{4n}\right)-\Delta_{p_0+M-1,n}V_{2n}-\rho_{2n}\\&
		-\int_{[0,1]^p}\left(f_n(\v{x})-\beta_0(\s{A}_0)-\v{x}'\vg{\beta}_{Z_p}(\s{A}_0)\right)^2d\v{x}-2\left(\eta_n^2+2\eta_n C_{3n}\right)-\Delta_{p_0,n}V_{1n}-\rho_{1n}\\
		&\geqslant 12\tau^2-12\left(M-p_0+1\right)\eta_n^2-4\eta_n^2-4\eta_n\left(C_{4n}+C_{3n}\right)-\Delta_{p_0,n}V_{1n}-\Delta_{p_0+M-1,n}V_{2n}-\rho_{1n}-\rho_{2n}\\&>D,
		\end{align*}
		which implies
		\begin{align}\label{meq}
		\frac{1}{n}\left\|\v{y}-\vg{1}_n\widehat{\beta}_0(\s{A}_0)-\m{X}\widehat{\vg{\beta}}_{Z_p}({\s{A}_0})\right\|^2
		<\frac{1}{n}\left\|\v{y}-\vg{1}_n\widehat{\beta}_0(\s{A}_1)-\m{X}\widehat{\vg{\beta}}_{Z_p}({\s{A}_1})\right\|^2
		\end{align}for sufficiently large $n$.
		
		Next consider any $\s{A}\in \s{U}_0$. Since $\s{A}\supset\s{A}_0$, we have
		\begin{align}\label{oeq}
		\frac{1}{n}\left\|\v{y}-\vg{1}_n\widehat{\beta}_0(\s{A})-\m{X}\widehat{\vg{\beta}}_{Z_p}(\s{A})\right\|^2\leqslant\frac{1}{n}\left\|\v{y}-\vg{1}_n\widehat{\beta}_0(\s{A}_0)
		-\m{X}\widehat{\vg{\beta}}_{Z_p}({\s{A}_0})\right\|^2.
		\end{align}
		Combining \eqref{meq} and \eqref{oeq}, we complete the proof.
	\end{proof}
	
	\begin{proof} (Proof of Theorem 3)
		Since $3\gamma_9<\gamma_0$, we have $3(M+1)^2(9M+1)\Delta_{2,n}=O(n^{3\gamma_9-\gamma_0}) \to 0$ as $n\to\infty$, and this implies Assumption 3.
		
		Let $\alpha=1/2$ in Assumption 3. It follows from $\gamma_9<2\gamma_1$ that $12(M-p_0+1)\eta_n^2 =O(n^{\gamma_9-2\gamma_1})\to 0$. By $\gamma_7<\gamma_1, \ \gamma_6<\gamma_1$, we have $4\eta_n(C_{4n}+C_{3n})=O(n^{\gamma_7-\gamma_1})+O(n^{\gamma_6-\gamma_1}) \to 0$.
		By $\gamma_2<\gamma_0,\gamma_3<\gamma_0$, we have $\Delta_{p_0,n}V_{1n}+\Delta_{p_0+M-1,n}V_{2n} =O(n^{\gamma_2-\gamma_0})+O(n^{\gamma_3-\gamma_0})\to 0$. Similarly, by $3\gamma_9+\gamma_4<\gamma_0,\ 3\gamma_9<\gamma_1,\ 6\gamma_9+\gamma_5<\gamma_0,\ 3\gamma_8+\gamma_4+\gamma_6<\gamma_0,\ 3\gamma_8+\gamma_6<\gamma_1,\ 6\gamma_8+\gamma_5+\gamma_6<\gamma_0,\
		3\gamma_9+\gamma_4+\gamma_7<\gamma_0,\ 3\gamma_9+\gamma_7<\gamma_1,\ 6\gamma_9+\gamma_5+\gamma_7<\gamma_0,\ \gamma_8<\gamma_9$, we have
		\begin{align*}
		\xi_{1n}&=O(n^{\gamma_4+3\gamma_8-\gamma_0})+O(n^{3\gamma_8-\gamma_1})+O(n^{\gamma_5+6\gamma_8-\gamma_0}),\\
		&<O(n^{\gamma_4+3\gamma_9-\gamma_0})+O(n^{3\gamma_9-\gamma_1})+O(n^{\gamma_5+6\gamma_9-\gamma_0}) \to 0,\\
		\xi_{2n}&=O(n^{\gamma_4+3\gamma_9-\gamma_0})+O(n^{3\gamma_9-\gamma_1})+O(n^{\gamma_5+6\gamma_9-\gamma_0}) \to 0,
		\end{align*}
		and
		\begin{align*}
		C_{3n}\xi_{1n}&=O(n^{\gamma_6+\gamma_4+3\gamma_8-\gamma_0})+O(n^{\gamma_6+3\gamma_8-\gamma_1})+O(n^{\gamma_6+\gamma_5+6\gamma_8-\gamma_0}) \to 0,\\
		C_{4n}\xi_{2n}&=O(n^{\gamma_7+\gamma_4+3\gamma_9-\gamma_0})+O(n^{\gamma_7+3\gamma_9-\gamma_1})+O(n^{\gamma_7+\gamma_5+6\gamma_9-\gamma_0}) \to 0,
		\end{align*}
		so
		$\rho_{1n}+\rho_{2n} \to 0$. Combining these results and $\eta_n=O(n^{-\gamma_1}),\ \gamma_1>0$, we have
		\begin{align*}
		&12\tau^2-12\left(M-p_0+1\right)\eta_n^2-4\eta_n^2-4\eta_n\left(C_{4n}+C_{3n}\right)-\Delta_{p_0,n}V_{1n}-\Delta_{p_0+M-1,n}V_{2n}-\rho_{1n}-\rho_{2n} \\
		&\to 12\tau^2,
		\end{align*}
		which implies Assumption 4.
	\end{proof}
	
	\begin{proof}{(Proof of Theorem 4)} By Assumption 6, we can take $\gamma_0$ satisfying $0<\gamma_0<\tilde{\gamma}_0,\ \gamma_2<\gamma_0,\ \gamma_3<{\gamma}_0,\ \gamma_4<{\gamma}_0,\
		\gamma_5<{\gamma}_0,\ \gamma_7<\gamma_1,\  \gamma_6<\gamma_1,\ \gamma_4+\gamma_6<{\gamma}_0,\ \gamma_5+\gamma_6<{\gamma}_0,\ \gamma_4+\gamma_7<{\gamma}_0$, and $\gamma_5+\gamma_7<{\gamma}_0$.
		
		By Lemma \ref{lemma del}, we have $3(M+1)^2(9M+1)\Delta_{2,n}=O(n^{-\gamma_0}) \to 0$ in probability. Let $\alpha=1/2$ in Assumption 3. and Assumption 3 holds with a probability tending to one.		
		Furthermore, by the conditions on $\gamma_0,\ldots,\gamma_7$ and $\gamma_{10}$ and Lemma \ref{lemma del}, some algebra yields
		\begin{align*}
		\rho_{1n},\ \rho_{2n},\ 4\eta_n^2, \ 12(M-p_0+1)\eta_n^2,\ 4\eta_n(C_{4n}+C_{3n}),\ \Delta_{p_0,n}V_{1n}+\Delta_{p_0+M-d,n}V_{2n}  \to 0,
		\end{align*}in probability, which implies $12\tau^2-12\left(M-p_0+1\right)\eta_n^2-4\eta_n^2-4\eta_n\left(C_{4n}+C_{3n}\right)-\Delta_{p_0,n}V_{1n}-\Delta_{p_0+M-1,n}V_{2n}-\rho_{1n}-\rho_{2n} \to 12\tau^2$ in probability, and then Assumption 4 holds with a probability tending to one. By Theorem 2, the proof is completed.		
	\end{proof}

\subsection{Proof when using the nonparametric additive model}
	\begin{proof} In Section 4, we state that there exists $f$ such that the screening method based on the nonparametric additive model (14) does not work. Here we prove this by giving an example of such $f$.

Consider the simple case of $p=2$ and the function $f(x_1,x_2)=(x_1-1/2)(x_2-1/2)$. Note that $\int_{[0,1]^2}f(x_1,x_2)d\v{x}=0$. Then the best additive approximation of $f$ is
\begin{align*}
b_1(x_1)+b_2(x_2) =\mathop{\arg\min}_{s_1\in C[0,1], \ s_2 \in C[0,1]} \int_{[0,1]^2}\left[f(x_1,x_2)-\left( s_1(x_1)+s_2(x_2) \right)^2  \right] d\v{x}.
\end{align*}By noting that $\int_{[0,1]^2}s_1(x_1)f(x_1,x_2)d\v{x}=\int_{[0,1]^2}s_1(x_1)(x_1-1/2)(x_2-1/2)d\v{x}=0$ and $\int_{[0,1]^2}s_2(x_2)f(x_1,x_2)d\v{x}=0$,
we have\begin{align*}
& \int_{[0,1]^2}\left[f(x_1,x_2)-\left( s_1(x_1)+s_2(x_2) \right)^2  \right] d\v{x}\\&=\int_{[0,1]^2}f(x_1,x_2)^2d\v{x}
+\int_{[0,1]^2}\left( s_1(x_1)+s_2(x_2)\right)^2d\v{x} -2\int_{[0,1]^2}\left(s_1(x_1)+s_2(x_2) \right) f(x_1,x_2)d\v{x}
\\&=\int_{[0,1]^2}f(x_1,x_2)^2d\v{x}
+\int_{[0,1]^2}\left( s_1(x_1)+s_2(x_2)\right)^2d\v{x}\geqslant\int_{[0,1]^2}f(x_1,x_2)^2d\v{x},
\end{align*} which implies $b_1=b_2=0$. This indicates that the nonparametric additive model cannot identify the two active variables $x_1$ and $x_2$ of $f$.
	\end{proof}

	\section*{Acknowledgement}
	
The authors thank the associate editor for helpful and constructive comments. This work is supported by the National Natural Science Foundation of China (Grant No. 11671386, 11871033) and Key Laboratory of Systems and Control, CAS.

	\section*{References}
	\bibliographystyle{unsrt}
	\begin{description}
		\item {}
		Bertsimas, D., King, A., and Mazumder, R. (2016), ``Best Subset Selection via a Modern Optimization Lens," \textit{The Annals of Statistics}, 44, 813-852.
		\item {}
		Chen, X. S., and Li, W. (2003),  ``Relative Perturbation Bounds of Eigenvalues for Positive Definite Hermite Matrices,"  \textit{Chinese Journal of Engineering Mathematics}, 69, 140-142.
		\item{}
		Clarke, J. A. (2001), ``Energy Simulation in Building Design (2nd ed.)," Butterworth-Heinemann, Oxford.
		\item {}
		Fan, J., and Lv, J. (2008), ``Sure Independence Screening for Ultrahigh Dimensional Feature Space, \textit{Journal of the Royal Statistical Society, Series B}," 70, 849-911.
		\item {}
		Fan, J., and Song, R. (2010), ``Sure Independence Screening in Generalized Linear Models With NP-Dimensionality," \textit{The  Annals of Statistics}, 38, 3567-3604.
		\item{}
		Fan, J., Feng, Y., and Song, R. (2011), ``Nonparametric Independence Screening in Sparse Ultra-High Dimensional Additive Models," \textit{Journal of the American Statistical Association}, 106, 544-557.
		\item{}
		Fan, J., Ma, Y., and Dai, W. (2014), `` Nonparametric Independence Screening in Sparse Ultra-High Dimensional Varying Cofficient Models," \textit{Journal of the American Statistical Association}, 109, 1270-1284.
		\item{}
		Fang, K. T., Li, R., and Sudjianto, A. (2006), ``Design and Modelling for Computer Experiments," Chapman \& Hall/CRC.
        \item{}
		Golub, G., and Van Loan, C. F. (1996), ``Matrix Computations,"  3rd Ed., Baltimore, The Johns Hopkins University Press.
		\item{}
		Golub, G. H., Heath, M., and Wahba, G. (1979), ``Generalized Cross-Validation as a Method for Choosing a Good Ridge Parameter," \textit{Technometrics}, 21, 215-223.
        \item{}
		Heinrich, S., Novak, E., Wasilkowski, G. W., and Wozniakowski, H. (2001), ``The Inverse of the Star-discrepancy Depends Linearly on the Dimension," \textit{Acta Arithmetica}, 96, 279-302.
		\item{}
		Huang, Q., and Zhu, Y. (2016), ``Model-Free Sure Screening via Maximum Correlation," \textit{Journal of Multivariate Analysis}, 148, 89-106.
		\item{}
		Jahangirian, M., Eldabi, T., Naseer, A., Stergioulas, L. K., and Young, T. (2010), ``Simulation in Manufacturing and Business: A review," \textit{European Journal of Operational Research}, 203, 1-13.
		\item{}
		Kiefer, J. and Wolfowitz, J. (1958), ``On the Deviations of the Empiric Distribution Function of Vector Chance Variables," \textit{ Transactions of the American Mathematical Society}, 87, 173-186.		
        \item{}
		Li, R., Zhong, W., and Zhu L. (2012), ``Feature Screening via Distance Correlation Learning," \textit{Journal of American Statistical Association}, 107, 1129-1139.
		\item{}
		Linkletter, C., Bingham, D., Hengartner, N., Higdon, D. and Ye, K. Q. (2006). ``Variable Selection for Gaussian Process Models in Computer Experiments," \textit{Technometrics}, 48, 478-490.
		\item {}
		Lu, J., and Lin, L. (2017), ``Model-free conditional screening via conditional distance correlation," \textit{Statistical Papers}, https://doi.org/10.1007/s00362-017-0931-7.
		\item{}
		Matheron, G. (1963), ``Principles of Geostatistics," \textit{Economic Geology}, 58, 1246-1266.
		\item{}
		Moon, H., Dean, A. M. and Santner, T. J. (2012), ``Two-stage Sensitivity-based Group Screening in Computer Experiments," \textit{Technometrics}, 54, 376-387.
		\item{}
		Morris, M. D. (1991), ``Factorial Sampling Plans for Preliminary Computational Experiments," \textit{Technometrics}, 33, 161-174.
		\item{}
		Morris, M. D., Mitchell, T. J., and Ylvisaker, D. (1993), ``Bayesian Design and Analysis of Computer Experiments: Use of Derivatives in Surface Prediction," \textit{Technometrics}, 35, 243-255.
		\item{}
		Owen, A. B. (2005), ``Multidimensional Variation for Quasi-Monte Carlo," in ``Fan, J. and Li, G., editors, International Conference
		on Statistics in honour of Professor Kai-Tai Fang's 65th birthday."
		\item{}
		Reich, B. J., Storlie, C. B., and Bondell, H. D. (2009), ``Variable Selection in Bayesian Smoothing Spline Anova Models: Application to Deterministic Computer Codes," \textit{Technometrics}, 51, 110-120.
		\item{}
		Roulstone, L. and Norbury, J. (2013), ``Invisible in the Storm: the Role of Mathematics in Understanding Weather". Princeton University Press.
\item{}
Ruppert, D., Wand, M. P., and Carroll, R. J. (2003). ``Semiparametric Regression". Cambridge University Press.
		\item{}	
		Santner, T. J., Williams, B. J., and Notz, W. I.  (2018), ``The Design and Analysis of Computer Experiments," The Second Edition, Springer-Verlag, New York.
		\item{}
		Schonlau, M., and Welch, W. J. (2006), ``Screening the Input Variables to a Computer Model via Analysis of Variance and Visualization," in Screening Methods for Experimentation in Industry, Drug Discovery and Genetics, eds. A. M. Dean and S. Lewis, New York: Wiley, 308–327.
		\item{}
		Shen, X. Pan, W., Zhu, Y., and Zhou, H. (2013), ``On Constrained and Regularized High-dimensional regression," \textit{Annals of the Institute of Statistical Mathematics}, 65, 807--832.
		\item{}
		Sobol', I. M. and Saltelli, A. (1995), ``About the Use of Rank Transformation in Sensitivity Analysis of Model Output," \textit{Reliability Engineering System Safety}, 50, 225-239.
\item{}
Sung, C.-L., Wang, W. J., Plumlee, M., and Haaland, B. (2017), ``Multi-Resolution Functional ANOVA for Large-Scale, Many-Input Computer Experiments," \textit{arXiv:1709.07064}.

		\item{}
		Tezuka, S. (2002), ``Quasi-Monte Carlo Discrepancy between Theory and Practice," \textit{ Monte Carlo and Quasi-Monte Carlo Methods 2000}, 124-140.
		\item{}
		Tibshirani, R. (1996), ``Regression Shrinkage and Selection via  Lasso," \textit{Journal of the Royal Statistical Society, Ser. B}, 58, 267-288.
		\item{}
		Worley, B. A. (1987),  ``Deterministic Uncertainty Analysis," Technical Report ORNL-6428, Oak Ridge National Research Laboratory.
		\item{}
		Xiong, S. (2014), ``Better Subset Regression," \textit{Biometrika}, 101, 71-84.
		\item{}
		Xiong, S., Qian, P. Z. G., and Wu, C. F. J. (2013). ``Sequential Design and Analysis of High-Accuracy and Low-Accuracy Computer Codes," \textit{Technometrics}, 55, 37-46.
		\item{}
		Xu, C. and Chen J. (2014), ``The Sparse MLE for Ultra-High-Dimensional Feature Screening," \textit{Journal of American Statistical Association}, 109, 1257--1269.
		
		\item{}
		Yang, X.-S. (2010), ``Appendix A: Test Problems in Optimization," \textit{Engineering optimization}, ed. X.-S. Yang, John Wiley and Sons, Inc..
		\item {}
		Zhu, L. P., Li, L. X., Li, R. Z., and Zhu L. X. (2011), `` Model-Free Feature Screening for Ultrahigh Dimensional data,"  \textit{Journal of American Statistical Association}, 106, 1464-1475.
	\end{description}
\end{document}